\newtheorem{theorem}{Theorem}
\newtheorem{proposition}{Proposition}
\newtheorem{corollary}{Corollary}
\newtheorem{conseq}{Corollary}
\DeclareMathOperator{\LEs}{LE}
\DeclareMathOperator{\LCEs}{LCE}
\DeclareMathOperator{\Tr}{Tr}
 \journal{arXiv} 
\begin{document}
\begin{frontmatter}

\title{The Lyapunov dimension and its computation
for self-excited and hidden attractors in the Glukhovsky-Dolzhansky fluid convection model}

\author[fin,spb]{Kuznetsov N.~V.\corref{cor}}
\author[spb,ipmash]{Leonov G.~A.}
\author[spb]{Mokaev T.~N.}
\cortext[cor]{Corresponding author email: nkuznetsov239@gmail.com}

\address[fin]{
Dept. of Mathematical Information Technology,\
University of Jyv\"{a}skyl\"{a}, Finland}
\address[spb]{
Faculty of Mathematics and Mechanics,\
Saint-Petersburg State University, Russia}
\address[ipmash]{Institute of Problems of Mechanical Engineering RAS, Russia}

\begin{abstract}
Consideration of various hydrodynamic phenomena
involves the study of the Navier-Stokes (N-S) equations,
what is hard enough for analytical and numerical investigations
since already in three-dimensional (3D) case it is a challenging task to study
the limit behavior of N-S solutions.
The low-order models (LOMs) derived from the initial N-S equations
by Galerkin method allow one to overcome difficulties in studying the
limit behavior and existence of attractors.
Among the simple LOMs with chaotic attractors
there are famous Lorenz system, which is an approximate model of two-dimensional convective flow
and Glukhovsky-Dolzhansky model,
which describes a convective process
in three-dimensional rotating fluid and
can be considered as an approximate model of the World Ocean.
One of the widely used dimensional characteristics of attractors is the Lyapunov dimension.
In the study we follow a rigorous approach for the definition
of the Lyapunov dimension and justification of its computation by the Kaplan-Yorke formula,
without using statistical physics assumptions.
The exact Lyapunov dimension formula for the global attractors
is obtained and peculiarities
of the Lyapunov dimension estimation for self-excited and hidden attractors
are discussed.
A tutorial on numerical estimation of the Lyapunov dimension
on the example of the Glukhovsky-Dolzhansky model is presented.

\end{abstract}

\begin{keyword}
  chaos, self-excited and hidden attractors, Lorenz-like systems,
  finite-time Lyapunov exponents, Lyapunov characteristic exponents,
  exact Lyapunov dimension formula,  tutorial on numerical estimation,
  Kaplan-Yorke formula
\end{keyword}

\end{frontmatter}

\section{Introduction}
The main difficulties in studying fluid motion
are related to infinite number of degrees of freedom of hydrodynamic objects.
To overcome these difficulties, one may use an approximation
(e.g., applying Galerkin method \citep{Thompson-1961}) 
of system of equations, describing the considered object with an infinite number of degrees of freedom,
by a system of equations with a finite number of degrees of freedom.
Resulting finite-dimensional analogues of the hydrodynamic equations,
called low-order models,
turn out to be more convenient for analytical and numerical investigations
\cite{Obukhov-1973,Glukhovsky-1982,GluhovskyG-2016}. 
Among the famous physically sounded low-order models there are the Lorenz model  \cite{Lorenz-1963}
(describing the Rayleigh-B\'{e}nard convection),
the Vallis model \cite{Vallis-1986} (describing El Ni\~{n}o climate phenomenon),  
and the Glukhovsky-Dolzhansky model \cite{GlukhovskyD-1980}
(describing fluid convection inside the rotating ellipsoidal cavity under
the horizontal heating).
One of the substantial features of these models is the existence of chaotic attractors in
their phase spaces.
From both theoretical and practical perspective it is important to
localize these attractors  \cite{Tucker-1999,Stewart-2000},
study their basins of attraction  \cite{MenckHMK-2013,PisarchikF-2014,DudkowskiJKKLP-2016},
and estimate their dimensions  \cite{GrassbergerP-1983}
with respect to varying parameters.

In the present paper a there-dimensional model,
describing the convection of fluid within an
ellipsoidal rotating cavity under an external horizontal heating,
is considered.
This model was suggested by Glukhovsky and Dolghansky~\cite{GlukhovskyD-1980} (G-D)
and can be considered as an approximate model of the World Ocean.
The mathematical G-D model is described by the following system of ODEs:
\begin{equation}
\begin{array}{l}
     \dot{x}  =  -\sigma x + z + a_0 yz, \\
     \dot{y}  =  R - y - xz, \\
     \dot{z}  =  -z + xy,
\end{array} \label{sys:conv_fluid}
\end{equation}
where $\sigma$, $R$, $a_0$ are positive parameters.

After the change of variables:
\begin{equation}\label{sys:conv_fluid:change_var}
  (x, \, y, \, z) \quad \rightarrow \quad \left(x, \, R - \frac{\sigma}{a_0 R+1} z, \, \frac{\sigma}{a_0 R+1} y\right),
\end{equation}
system~\eqref{sys:conv_fluid} takes the form of generalized Lorenz system
\begin{equation}
\begin{array}{l}
  \dot{x} $ = $ -\sigma x + \sigma y - A y z,\\
  \dot{y} $ = $ r x - y - x z, \\
    \dot{z} $ = $ -b z + x y,
\end{array}
\label{sys:lorenz-general}
\end{equation}
where
\begin{equation}
  b = 1, \quad A = \frac{a_0\sigma^2}{(a_0 R + 1)^2}, \quad
  r = \frac{R}{\sigma}(a_0 R + 1). \label{sys:conv_fluid:param}
\end{equation}
If
\begin{equation}\label{sys:gen-lorenz:change_var}
  R = r (\sigma - Ar) > 0, \quad
  a_0 = \frac{A}{(\sigma - Ar)^2} > 0, \quad
  b = 1.
\end{equation}
then we have the inverse transformation
\[
  (x, \, y, \, z) \to
  \left( x, \, \frac{1}{\sigma - Ar} \, z, \, r - \frac{1}{\sigma - Ar} \, y\right).
\]
For $A = 0$ system \eqref{sys:lorenz-general} coincides with
the classical Lorenz system \cite{Lorenz-1963}.

System \eqref{sys:lorenz-general} with the parameters $r$, $\sigma$, $b > 0$
is mentioned first in the work of Rabinovich \cite{Rabinovich-1978}
and in the case $A < 0$ can be transformed \cite{LeonovB-1992}
to the Rabinovich system of waves interaction
in plasma \cite{PikovskiRT-1978,KuznetsovLMS-2016-INCAAM}.
Following Glukhovsky and Dolghansky \cite{GlukhovskyD-1980},
consider system \eqref{sys:lorenz-general} under the physically sounded assumption that
$r$, $\sigma$, $b$, $A$ are positive.

\captionsetup[subfig]{margin=10pt,format=hang}
\begin{figure}[h!]
    \centering
    \subfloat[
    Monostability ($r = 687.5$):
    trajectories from almost all initial points except for a set of zero measure
    (including unstable equilibria $S_{0,1,2}$)
    tend to the same chaotic attractor $K_{\text{self-excited}}$
    (self-excited attractor with respect to all three equilibria: e.g.
    one-dimensional unstable manifold of $S_0$ is attracted to $K_{\text{self-excited}}$
    and, thus, visualizes it).
    ] {
      \label{fig:fluid_conv:attr_se}
      \includegraphics[width=0.49\textwidth]{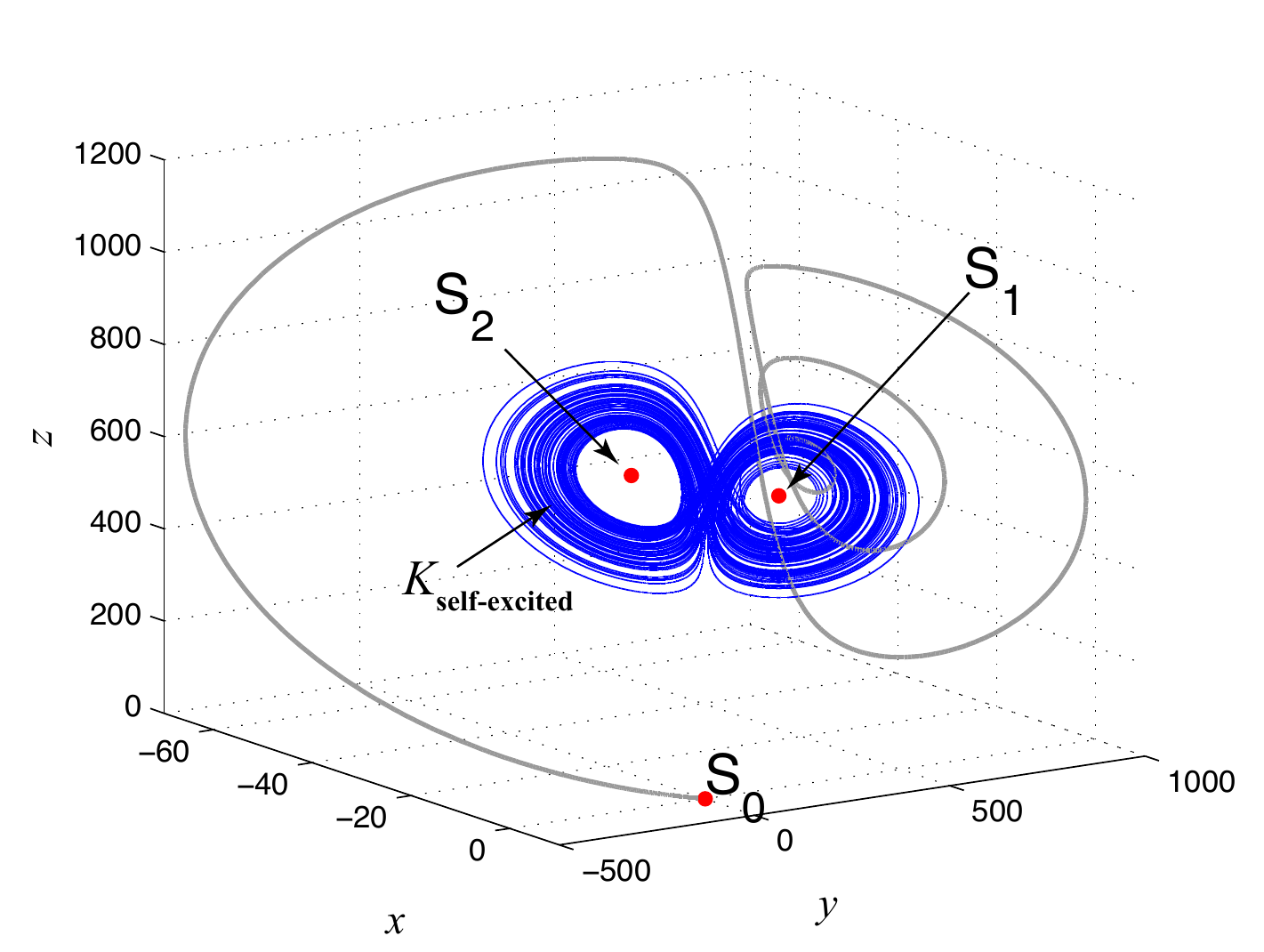}
    }~~
    \subfloat[
    Multistability ($r = 700$):
    Coexistence of three local attractors ---
    two stable equilibria $S_{1,2}$ and one chaotic attractor $K_{\text{hidden}}$
    (hidden attractor, which basin of attraction
    does not overlap with an arbitrarily small vicinity of equilibria:
    one-dimensional unstable manifold of $S_0$ is attracted to $S_{1,2}$).
    ] {
      \label{fig:fluid_conv:attr_hid}
      \includegraphics[width=0.49\textwidth]{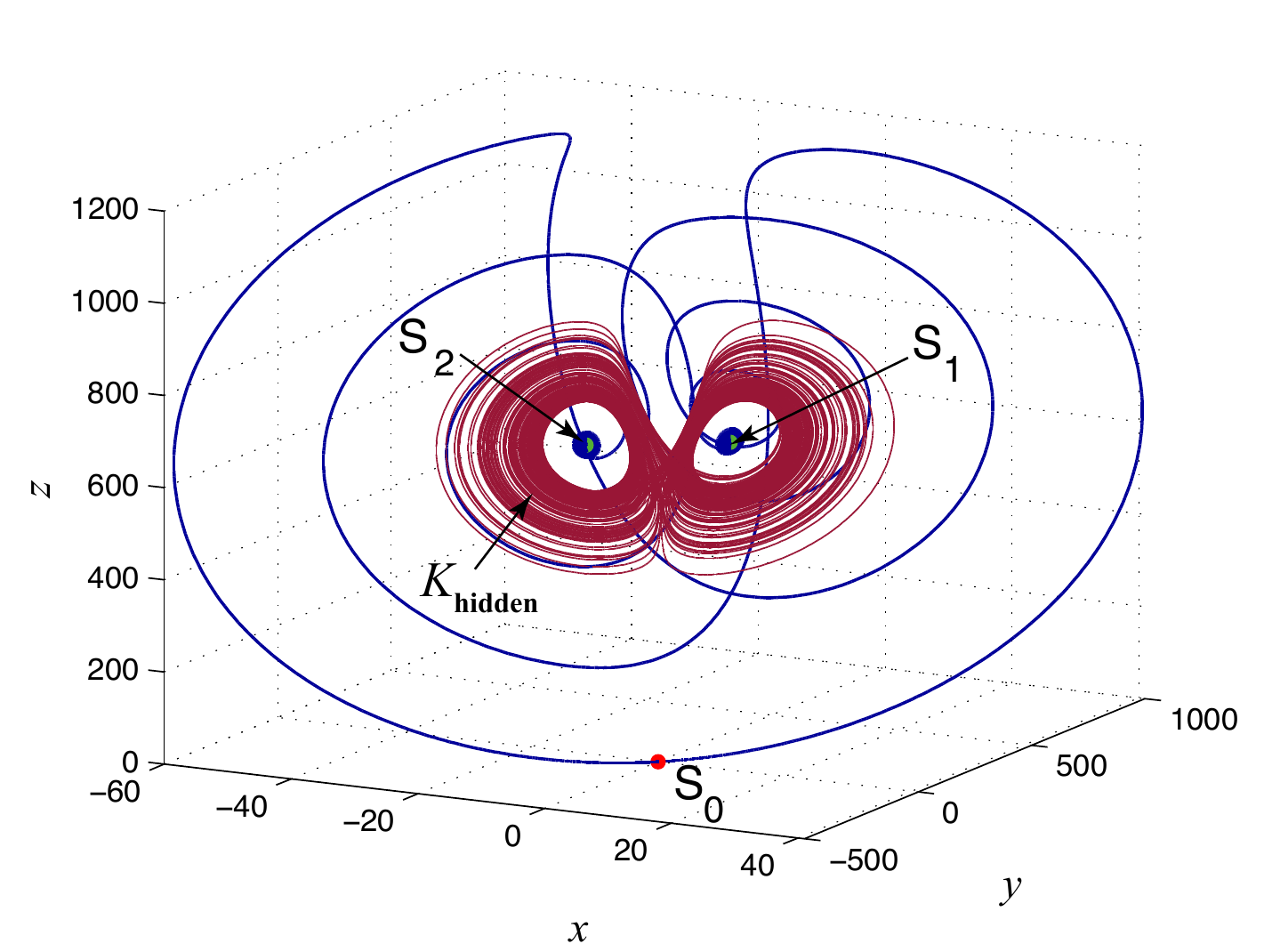}
    }
    \caption{Self-excited and hidden attractors in system \eqref{sys:lorenz-general}
    with $b = 1$, $\sigma = 4$, $A = 0.0052$.}
  \label{fig:attractors}
\end{figure}

Systems \eqref{sys:conv_fluid} and \eqref{sys:lorenz-general}
are of particular interest because they
have chaotic attractors (Fig.~\ref{fig:attractors}).
By numerical simulations
in the case when parameter $\sigma = 4$
it is obtained \cite{GlukhovskyD-1980}
certain values of the parameters
for which systems
\eqref{sys:conv_fluid} and \eqref{sys:lorenz-general}
possess self-excited attractors (Fig. \ref{fig:fluid_conv:attr_se}).
An attractor is called a \emph{self-excited attractor} if its basin of attraction
intersects an arbitrarily small open neighborhood of an equilibrium,
otherwise it is called a \emph{hidden attractor}
 \cite{LeonovKV-2011-PLA,LeonovKV-2012-PhysD,LeonovK-2013-IJBC,LeonovKM-2015-EPJST}.
Self-excited attractors are relatively simple for localization and can be obtained
using trajectories from an arbitrary small neighborhood of unstable equilibrium.
The use of the term  \emph{self-excited oscillation} or {\it self-oscillations}
can be traced back to the works of H.G.~Barkhausen and A.A.~Andronov,
where it describes the generation and maintenance of a periodic motion in mechanical
and electrical models by a source of power that lacks any corresponding periodicity
(e.g., stable limit cycle in the van der Pol oscillator) \cite{AndronovVKh-1966,Jenkins-2013}.
We use this notion for attractors of dynamical systems
to describe the existence of transient process
from a small vicinity of an unstable equilibrium
to an attractor.
If there is no such a transient process for an attractor,
it is called a hidden attractor.
The \emph{hidden and self-excited classification of attractors} was introduced by Leonov and Kuznetsov in connection with the discovery of hidden Chua attractor \cite{KuznetsovLV-2010-IFAC,LeonovKV-2011-PLA,KuznetsovKLV-2013} and
its rigorous consideration for autonomous and nonautonomous systems can be found in
 \cite{LeonovK-2013-IJBC,LeonovKM-2015-EPJST,Kuznetsov-2016,DudkowskiJKKLP-2016}.
For example, hidden attractors are attractors in systems
without equilibria or with only one stable equilibrium
(a special case of multistability and coexistence of attractors).
Some examples of hidden attractors can be found in
 \cite{ShahzadPAJH-2015-HA,BrezetskyiDK-2015-HA,JafariSN-2015-HA,ZhusubaliyevMCM-2015-HA,SahaSRC-2015-HA,Semenov20151553,FengW-2015-HA,Li20151493,FengPW-2015-HA,Sprott20151409,PhamVVJ-2015-HA,VaidyanathanPV-2015-HA,
Danca-2016-HA,Zelinka-2016-HA,DudkowskiJKKLP-2016}.
Recently hidden attractors were localized \cite{LeonovKM-2015-CNSNS,LeonovKM-2015-EPJST}
in systems \eqref{sys:conv_fluid} and \eqref{sys:lorenz-general}
(Fig. \ref{fig:fluid_conv:attr_hid}).

By the Lyapunov function $V(x,y,z) = \frac{1}{2} \left( x^2 + y^2 + (A+1)
\left(z - \frac{\sigma + r}{A + 1} \right)^2 \right)$ 
it is proved  \cite{LeonovB-1992} that system~\eqref{sys:lorenz-general}
possesses a bounded absorbing ellipsoid
(thus it is \emph{dissipative in the sense of Levinson} \cite{LeonovKM-2015-EPJST})
\begin{equation}\label{absorb_set}
  \mathcal{B}(r,\sigma,A) = \left\{(x,y,z) \in \mathbb{R}^3 ~|~ V(x,y,z) \leq \frac{(\sigma + r)^2}{2(A+1)}\right\}
\end{equation}
and, thus, has 
a global attractor and generates a dynamical system.
Also it is known \cite{LeonovB-1992} that for $b = 1$
the global attractor is located in the positive invariant set
\begin{equation}\label{localize_set}
  \Omega = \left\{y^2 + z^2 \leq 2 r z\right\}.
\end{equation}

To obtain necessary conditions of the existence of self-excited attractor,
we consider the stability of equilibria in system \eqref{sys:lorenz-general}.
According to \cite{LeonovB-1992}, we have the following:
if $r < 1$, then there is a unique equilibrium
${\bf \rm S_0} = (0, \, 0, \, 0)$ (the trivial case).
If $r > 1$, then \eqref{sys:lorenz-general} has three equilibria:
${\bf \rm S_0} = (0, \, 0, \, 0)$ -- saddle, and
${\bf \rm S_{1,2}} = (\pm x_1, \, \pm y_1, \, z_1)$, where
$$
  x_1 = \frac{\sigma b \sqrt{\xi}}{\sigma b + A \xi}, \quad
  y_1 = \sqrt{\xi}, \quad
  z_1 = \frac{\sigma \xi}{\sigma b + A \xi}, \quad
  \xi = \frac{\sigma b}{2 A^2}
  \left[ A (r-2) - \sigma + \sqrt{(Ar-\sigma)^2 + 4A\sigma} \right].
$$
The stability of $S_{1,2}$ depends on the parameters, e.g.
for $b = 1$, $\sigma = 4$ the stability domain \cite{LeonovKM-2015-EPJST}
is shown in Fig.~\ref{fig:stab_dom}.
Here for parameters from the non-shaded region, a self-excited attractor
can be localized by a trajectory from a small neighborhood of $S_0$, $S_1$ or $S_2$.
\begin{figure}[h!]
    \centering
    \includegraphics[width=0.5\textwidth]{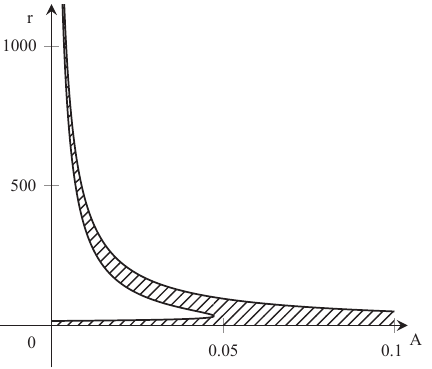}
    \caption{Stability domain (shaded) of the equilibria $S_{1,2}$ for $b = 1$, $\sigma = 4$.}
    \label{fig:stab_dom}
\end{figure}
To get a numerical characteristic of chaos in a system
using numerical methods,
it is possible to compute a local Lyapunov dimension for this trajectory,
what gives an estimation of Lyapunov dimension of the corresponding self-excited attractor.
For the hidden attractor visualization in the considered systems
we need to use special analytical-numerical procedures
of searching for a point in its domain of attraction \cite{LeonovKM-2015-CNSNS,LeonovKM-2015-EPJST}.
Thus, the estimation of Lyapunov dimension of hidden attractors in the considered systems
is a challenging task.

\section{Preliminaries. Analytical estimates of the Lyapunov dimension}\label{sec:1}

\subsection{The Lyapunov dimension and Kaplan-Yorke formula}
The concept of the Lyapunov dimension was suggested in the seminal paper
by Kaplan and Yorke \cite{KaplanY-1979} for estimating the Hausdorff dimension
of attractors.
Later it has been developed and rigorously justified in a number of papers
(see, e.g. \cite{DouadyO-1980,Ledrappier-1981,EdenFT-1991} and others).
The direct numerical computation of the Hausdorff dimension
of chaotic attractors is often a problem of high numerical complexity
(see, e.g. the discussion in \cite{RusselHO-1980}),
thus, the estimates by the Lyapunov dimension are of interest
(see, e.g. \cite{GrassbergerP-1983}).
Along with numerical methods for computing the Lyapunov dimension there
is an effective analytical approach, proposed by Leonov in 1991 \cite{Leonov-1991-Vest}
(see also \cite{LeonovB-1992,BoichenkoLR-2005,Kuznetsov-2016-PLA}).
The Leonov method is based
on a combination of the Douady-Oesterl\'{e} approach
with the direct Lyapunov method.
The advantage of the method is that it often allows one to estimate the Lyapunov dimension
of attractor without localization of attractor in the phase space and,
in many cases, to get an exact Lyapunov dimension formula
\cite{Leonov-2002,LeonovP-2005,Leonov-2012-PMM,LeonovPS-2013-PLA,
LeonovKKK-2015-arXiv-YangTigan,LeonovAK-2015,LeonovKKK-2016-CNSCS}.

Nowadays it is known various approaches to the definition of the Lyapunov dimension.
Below we use a rigorous definition \cite{Kuznetsov-2016-PLA}
of the Lyapunov dimension inspirited by
the Douady-Oesterl\'{e} theorem on the Hausdorff dimension of maps.
Consider an autonomous differential equation
\begin{equation} \label{eq:ode}
  \dot u = f(u),
\end{equation}
where $f: U \subseteq \mathbb{R}^n \to \mathbb{R}^n$ is a continuously differentiable vector-function,
$U$ is an open set.
Define by $u(t,u_0)$ a solution of \eqref{eq:ode} such that $u(0,u_0)=u_0$,
and consider the evolutionary operator $\varphi^t(u_0) = u(t,u_0)$.
We assume the uniqueness and existence of solutions of \eqref{eq:ode} for $t \in [0,+\infty)$.
Then system \eqref{eq:ode} generates a dynamical system
$\{\varphi^t\}_{t\geq0}$.
Let a nonempty set $K \subset U$ be invariant with respect to $\{\varphi^t\}_{t\geq0}$,
i.e. $\varphi^t(K) = K$ for all $t \geq 0$.
Consider the linearization of system (\ref{eq:ode})
along the solution $\varphi^t(u)$:
\begin{equation} \label{sfl}
  \begin{aligned}
    & \dot v = J(\varphi^t(u))v,  \quad J(u) = Df(u),
  \end{aligned}
\end{equation}
where $J(u)$ is the $n\times n$ Jacobian matrix,
the elements of which are continuous functions of $u$.
Suppose that $\det J(u) \neq 0 \quad \forall u \in U$.
Consider a fundamental matrix of linearized system \eqref{sfl}
$D\varphi^t(u)$ such that $D\varphi^0(u) = I$, where $I$ is a unit $n \times n$ matrix.
Let $\sigma_i(t,u) = \sigma_i(D\varphi^t(u))$, $i = 1,2,..\, n$, be the singular values of $D\varphi^t(u)$
with respect to their algebraic multiplicity
ordered so that $\sigma_1(t,u) \geq ... \geq \sigma_n(t,u) > 0$ for any $u \in U$ and $t \geq 0$.
The \emph{singular value function of order} $d \in [0,n]$ at $u \in U$ is defined as
\begin{equation}\label{defomega}
  \begin{cases}
  \omega_0(D\varphi^t(u)) = 1 , \\
  \omega_d(D\varphi^t(u)) =
  \sigma_1(t, u)\cdots\sigma_{\lfloor d \rfloor}(t, u)
    \sigma_{\lfloor d \rfloor+1}(t, u)^{d-\lfloor d \rfloor},\ \quad d \in (0,n), \\
  \omega_n(D\varphi^t(u)) =\sigma_1(t, u)\cdots\sigma_n(t, u),  \\
  \end{cases}
\end{equation}
where ${\lfloor d \rfloor}$ is the largest integer less or equal to $d$. 
For a certain moment of time $t$ \emph{the local Lyapunov dimension of the map $\varphi^t$
at the point $u \in \mathbb{R}^n$}
(or \emph{the finite-time local Lyapunov dimension of dynamical system $\{\varphi^t\}_{t\geq0}$})
is defined as \cite{Kuznetsov-2016-PLA}
\begin{equation}\label{locDOmaptmax}
  \dim_{\rm L}(\varphi^t,u) = \max\{d \in [0,n]: \omega_{d}(D\varphi^t(u)) \geq 1 \}
\end{equation}
and \emph{the Lyapunov dimension of the map  $\varphi^t$}
(or \emph{the finite-time Lyapunov dimension of dynamical system $\{\varphi^t\}_{t\geq0}$})
with respect to invariant set $K$ is defined as
\begin{equation}\label{DOmaptmax}
  \dim_{\rm L}(\varphi^t, K) = \sup\limits_{u \in K} \dim_{\rm L}(\varphi^t,u) =
  \sup\limits_{u \in K} \max\{d \in [0,n]: \omega_{d}(D\varphi^t(u)) \geq 1\}.
\end{equation}

The following is a corollary of the fundamental Douady--Oesterl\'{e} theorem \cite{DouadyO-1980}
\begin{theorem}\label{DOthm}
For any fixed $t > 0$
the Lyapunov dimension of the map $\varphi^t$ with respect
to a compact invariant set $K$, defined by \eqref{DOmaptmax}, is an upper
estimate of the Hausdorff dimension of the set $K$: $\dim_{\rm H}K \leq \dim_{\rm L}(\varphi^t, K)$.
\end{theorem}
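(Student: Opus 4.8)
The plan is to reduce the statement to a single application of the fundamental Douady--Oesterl\'{e} theorem applied to the fixed-time map $\psi := \varphi^t$, treating $t>0$ as frozen. The version I intend to invoke is: if $\psi$ is $C^1$ on $U$, $K\subset U$ is compact with $\psi(K)\supseteq K$, and there is some $d\in(0,n]$ with $\omega_d(D\psi(u))<1$ for every $u\in K$, then $\dim_{\rm H}K\le d$. Here $\psi=\varphi^t$ is $C^1$ because $f$ is $C^1$, and $\varphi^t(K)=K\supseteq K$ by invariance, so the hypotheses on $\psi$ and $K$ are met; what remains is to produce, for every $d$ slightly larger than the Lyapunov dimension, the pointwise contraction $\omega_d(D\varphi^t(u))<1$ on all of $K$.

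First I would set $d^\ast:=\dim_{\rm L}(\varphi^t,K)=\sup_{u\in K}\dim_{\rm L}(\varphi^t,u)$. If $d^\ast=n$ the claim is immediate, since $\dim_{\rm H}K\le n$ automatically; so assume $d^\ast<n$ and fix an arbitrary $d\in(d^\ast,n]$. The key structural observation is that, for each fixed $u$, the map $d\mapsto\log\omega_d(D\varphi^t(u))$ is continuous, piecewise linear, and \emph{concave}: on each interval $(k,k+1)$ its slope equals $\log\sigma_{k+1}(t,u)$, and these slopes are nonincreasing because $\sigma_1(t,u)\ge\cdots\ge\sigma_n(t,u)$. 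Since $\log\omega_0=0$ and a concave function has convex, hence interval, superlevel sets, the set $\{d:\omega_d(D\varphi^t(u))\ge1\}$ is exactly $[0,\dim_{\rm L}(\varphi^t,u)]$. Consequently, for our chosen $d>d^\ast\ge\dim_{\rm L}(\varphi^t,u)$ we obtain $\omega_d(D\varphi^t(u))<1$ for every $u\in K$.

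Next I would upgrade this pointwise strict inequality to a uniform one, which is what the covering argument behind Douady--Oesterl\'{e} actually requires. Because $f\in C^1$, the fundamental matrix $D\varphi^t(u)$ depends continuously on $u$, the singular values depend continuously on the matrix, and $\omega_d$ depends continuously on the singular values; hence $u\mapsto\omega_d(D\varphi^t(u))$ is continuous on the compact set $K$ and attains its maximum, giving $\max_{u\in K}\omega_d(D\varphi^t(u))<1$. Applying the Douady--Oesterl\'{e} theorem to $\psi=\varphi^t$ then yields $\dim_{\rm H}K\le d$. Since $d\in(d^\ast,n]$ was arbitrary, letting $d\downarrow d^\ast$ gives $\dim_{\rm H}K\le d^\ast=\dim_{\rm L}(\varphi^t,K)$.

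I expect the only genuinely delicate point to be the transition from the pointwise bound to the uniform one, and in particular the fact that one cannot apply the theorem directly at $d=d^\ast$, where only $\omega_{d^\ast}\ge1$ is guaranteed; this forces the limiting argument over $d>d^\ast$ and makes compactness of $K$ together with continuity of the singular value function essential. The concavity/interval property of $\omega_d$ is the other ingredient that must be stated carefully, since it is precisely what links the definition of $\dim_{\rm L}(\varphi^t,u)$ via a maximum to the required contraction $\omega_d<1$ for all larger $d$.
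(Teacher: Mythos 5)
Your proposal is correct and follows essentially the same route as the paper, which states this result as a direct corollary of the Douady--Oesterl\'{e} theorem \cite{DouadyO-1980} and offers no further proof. Your writeup simply makes explicit the standard details the paper leaves implicit -- the concavity of $d\mapsto\log\omega_d(D\varphi^t(u))$ linking the max-definition \eqref{locDOmaptmax} to the contraction condition $\omega_d<1$ for $d$ above the supremum, the compactness/continuity step giving uniformity, and the limit $d\downarrow\dim_{\rm L}(\varphi^t,K)$ -- all of which are sound.
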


For the estimation of the Hausdorff dimension of invariant compact set $K$
one can use the map $\varphi^t$ with any time $t$
(e.g. $t=0$ leads to the trivial estimate $\dim_{\rm H}K \leq n$),
therefore the best estimation is
\(
  \dim_{\rm H}{K} \le \inf_{t\geq0}\dim_{\rm L} (\varphi^t, K).
\)
By the properties of the singular value function and the cocycle property of fundamental matrix
we can prove \cite{Kuznetsov-2016-PLA}
that
\begin{equation}\label{DOlim}
  \inf_{t\geq0}\sup\limits_{u \in K} \dim_{\rm L}(\varphi^t,u)
  = \liminf_{t \to +\infty}\sup\limits_{u \in K} \dim_{\rm L}(\varphi^t,u).
\end{equation}
This property allows one to introduce
\emph{the Lyapunov dimension of dynamical system $\{\varphi^t\}_{t\geq0}$}
with respect to compact invariant set $K$
(often called \emph{the Lyapunov dimension of $K$})
as \cite{Kuznetsov-2016-PLA}
\begin{equation}\label{DOinf}
  \dim_{\rm L} (\{\varphi^t\}_{t \geq 0}, K) = \liminf_{t \to +\infty}\dim_{\rm L} (\varphi^t, K)
  = \liminf_{t \to +\infty}\sup\limits_{u \in K} \dim_{\rm L}(\varphi^t,u)
\end{equation}
which is an upper estimation of the Hausdorff dimension
\begin{equation}\label{HDOinf}
  \dim_{\rm H}{K} \le \dim_{\rm L} (\{\varphi^t\}_{t \geq 0}, K).
\end{equation}

Consider a set of finite-time Lyapunov exponents (of singular values) at the point $u$:
\begin{equation}\label{ftLE}
  \LEs_i(t,u) = \frac{1}{t}\ln\sigma_i(t,u), \ i=1,2,...,n \quad t > 0.
\end{equation}
Here the set $\{\LEs_i(t,u)\}_{i=1}^n$ is ordered by decreasing
(i.e. $\LEs_1(t,u) \geq \cdots \geq \LEs_n(t,u)$ for all $t>0$)
since the singular values are ordered by decreasing.
Define $j(t,u) = \max\{m: \sum_{i=1}^{m}\LEs_i(t,u) \geq 0\}$,
and let $n> j(t,u) \geq 1$.
Then the \emph{Kaplan-Yorke formula} \cite{KaplanY-1979} with respect to the finite-time
Lyapunov exponents $\{\LEs_i(t,u)\}_{i=1}^n$ is as follows \cite{Kuznetsov-2016-PLA}
\begin{equation}\label{lftKY}
   d_{\rm L}^{\rm KY}(\{\LEs_i(t,u)\}_{i=1}^n) =
   j(t,u) + \frac{\LEs_1(t,u) + \cdots + \LEs_{j(t,u)}(t,u)}{|\LEs_{j(t,u)+1}(t,u)|},
\end{equation}
and it coincides with the local Lyapunov dimension of the map $\varphi^t$ at the point $u$:
\[
   \dim_{\rm L}(\varphi^t, u) = d_{\rm L}^{\rm KY}(\{\LEs_i(t,u)\}_{i=1}^n).
\]
Thus, the use of Kaplan-Yorke formula \eqref{lftKY} with $\{\LEs_i(t,u)\}_{i=1}^n$
is rigorously justified by the Douady--Oesterl\'{e} theorem.
In the above formula if $n > \dim_{\rm L} (\varphi^t, u) > 1$,
 then for $j(t, u) = \lfloor \dim_{\rm L} (\varphi^t, u) \rfloor$ and $s(t, u) = \dim_{\rm L} (\varphi^t, u) - \lfloor \dim_{\rm L} (\varphi^t, u) \rfloor$
 from \eqref{locDOmaptmax} we have
\(
  0 = \frac{1}{t} \ln(\omega_{j(t,u)+s(t,u)}(D\varphi^t(u)))
    = \sum_{i=1}^{j(t,u)}\LEs_i(t,u) + s(t, u)\LEs_{j(t,u)+1}(t,u)
\).

It is known that
while the topological dimensions are invariant with respect to Lipschitz homeomorphisms,
the Hausdorff dimension is invariant with respect to Lipschitz diffeomorphisms
and the noninteger Hausdorff dimension is not invariant with respect to homeomor\-phisms \cite{HurewiczW-1941}.
Since the Lyapunov dimension is used as an upper estimate of the Hausdorff dimension,
its corresponding properties are important (see, e.g. \cite{OttWY-1984}).
Consider the dynamical system
$\big(\{\varphi^t\}_{t\geq0},(U\subseteq \mathbb{R}^n,||\cdot||) \big)$
under the smooth change of coordinates $w = h(u)$,
where $h: U \subseteq \mathbb{R}^n \to \mathbb{R}^n$ is a diffeomorphism.
In this case
the dynamical system
$\big(\{\varphi^t\}_{t\geq0},(U\subseteq \mathbb{R}^n,||\cdot||) \big)$
is transformed to
the dynamical system
$\big(\{\varphi_h^t\}_{t\geq0},(h(U)\subseteq \mathbb{R}^n,||\cdot||) \big)$,
and the compact set $K \subset U$ invariant with respect to $\{\varphi^t\}_{t\geq0}$
is mapped to the compact set $h(K) \subset h(U)$
invariant with respect to $\{\varphi_h^t\}_{t\geq0}$.

\begin{proposition}\label{thm:dDOunderdiff} (see, e.g. \cite{KuznetsovAL-2016,Kuznetsov-2016-PLA})
The Lyapunov dimension of the dynamical system $\{\varphi^t\}_{t\geq0}$
with respect to the compact invariant set $K$
is invariant with respect to any diffeomorphism
$h: U \subseteq \mathbb{R}^n \to \mathbb{R}^n$, i.e.
\begin{equation}\label{dDOunderdiff}
\begin{aligned}
  &
  \dim_{\rm L}(\{\varphi^t\}_{t\geq0},K)
  =
  \dim_{\rm L}(\{\varphi_h^t\}_{t\geq0},h(K)).
\end{aligned}
\end{equation}
\end{proposition}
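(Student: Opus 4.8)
The plan is to reduce the statement to a uniform, time-independent comparison of singular values, after which the factor $\tfrac1t$ in the definition of the finite-time Lyapunov exponents renders the perturbation introduced by $h$ negligible in the limit $t\to+\infty$.

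First I would record the transformation law. Since $\varphi_h^t=h\circ\varphi^t\circ h^{-1}$, the chain rule gives, for $w=h(u)$,
\begin{equation*}
  D\varphi_h^t(h(u))=Dh(\varphi^t(u))\,D\varphi^t(u)\,\bigl(Dh(u)\bigr)^{-1}=:P_t(u)\,D\varphi^t(u)\,Q_t(u).
\end{equation*}
Because $K$ is compact and invariant (so $u,\varphi^t(u)\in K$ for all $t\ge0$) and $h$ is a diffeomorphism, the continuous maps $Dh$ and $(Dh)^{-1}$ are bounded on $K$: there is an $M\ge1$ with $\|P_t(u)\|,\|P_t(u)^{-1}\|,\|Q_t(u)\|,\|Q_t(u)^{-1}\|\le M$ for all $t\ge0$ and $u\in K$. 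The essential point is that $M$ depends only on $h$ and $K$, \emph{not} on $t$ or $u$; invariance of $K$ is exactly what makes the bound uniform in time.

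Next I would prove the key comparison lemma. For invertible $P,Q$ with $\|P\|,\|P^{-1}\|,\|Q\|,\|Q^{-1}\|\le M$, the min--max characterization of singular values gives $\sigma_i(PAQ)\le\|P\|\,\|Q\|\,\sigma_i(A)\le M^2\sigma_i(A)$, and applying the same to $A=P^{-1}(PAQ)Q^{-1}$ yields the reverse bound $\sigma_i(PAQ)\ge M^{-2}\sigma_i(A)$, for every index $i$. Inserting these into the definition \eqref{defomega} of the singular value function (a product of singular-value factors with total exponent $d\le n$) gives
\begin{equation*}
  M^{-2n}\,\omega_d(D\varphi^t(u))\le\omega_d\bigl(D\varphi_h^t(h(u))\bigr)\le M^{2n}\,\omega_d(D\varphi^t(u)),\qquad d\in[0,n].
\end{equation*}
Taking $\tfrac1t\ln$ and recalling \eqref{ftLE}, this is a uniform estimate
\begin{equation*}
  \Bigl|\tfrac1t\ln\omega_d\bigl(D\varphi_h^t(h(u))\bigr)-\tfrac1t\ln\omega_d\bigl(D\varphi^t(u)\bigr)\Bigr|\le\frac{2n\ln M}{t},
\end{equation*}
valid for all $d\in[0,n]$, all $u\in K$ and all $t>0$.

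Finally I would convert this into invariance of the dimension. For fixed $t,u$ the map $d\mapsto\tfrac1t\ln\omega_d(\cdot)$ is continuous, piecewise linear and concave (its slope on $[k,k+1]$ is the ordered exponent $\LEs_{k+1}$), and by \eqref{locDOmaptmax} the finite-time local Lyapunov dimension is the largest $d$ at which this function is nonnegative; in the interesting case $1<\dim_{\rm L}<n$ it is its largest zero. A uniform additive perturbation of a concave function of size $\le 2n\ln M/t$ moves its largest zero by at most $(2n\ln M/t)\big/|\LEs_{j(t,u)+1}(t,u)|$, so by the Kaplan--Yorke representation \eqref{lftKY} the finite-time dimensions of the two systems differ by this amount. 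Taking $\sup_{u\in K}$ (the bound is uniform in $u$) and then $\liminf_{t\to+\infty}$ in \eqref{DOinf}, and using that $h^{-1}$ is again a diffeomorphism to obtain the reverse inequality by the same argument, yields the claimed equality \eqref{dDOunderdiff}. I expect the main obstacle to be precisely this last passage: the $O(1/t)$ perturbation is divided by $|\LEs_{j+1}|$, so one must ensure that the relevant finite-time exponent does not degenerate to $0$ as $t\to+\infty$. This is where the concavity of $\tfrac1t\ln\omega_d$ (a spectral gap at the crossing index) together with the $\liminf$ in the definition of the Lyapunov dimension does the work, guaranteeing that the correction vanishes in the limit.
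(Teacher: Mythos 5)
The paper itself gives no proof of this proposition (it defers to \cite{KuznetsovAL-2016,Kuznetsov-2016-PLA}), so your attempt must be measured against the standard argument in those references. Your first two steps coincide with it and are correct: the conjugation law $D\varphi_h^t(h(u))=Dh(\varphi^t(u))\,D\varphi^t(u)\,(Dh(u))^{-1}$, the uniform bound $M$ on $\|Dh\|,\|(Dh)^{-1}\|$ over the compact invariant set $K$ (this is indeed exactly where compactness and invariance enter), and the resulting two-sided estimate $M^{-2n}\omega_d(D\varphi^t(u))\le\omega_d\bigl(D\varphi_h^t(h(u))\bigr)\le M^{2n}\omega_d(D\varphi^t(u))$ for all $d\in[0,n]$, $u\in K$, $t>0$.

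The gap is in your final step, at precisely the point you flag and then dismiss. From the $O(1/t)$ additive perturbation of $\tfrac1t\ln\omega_d$ you bound the displacement of the largest zero by $(2n\ln M/t)\big/|\LEs_{j(t,u)+1}(t,u)|$ and assert that concavity together with the $\liminf$ in \eqref{DOinf} forces this to vanish. Neither does: concavity provides no lower bound (no ``spectral gap'') on $|\LEs_{j(t,u)+1}(t,u)|$, and nothing in the definitions prevents this quantity from decaying as fast as $O(1/t)$ at the maximizing points. Note that at the crossing index one has $|\LEs_{j+1}(t,u)|>\LEs_1(t,u)+\dots+\LEs_j(t,u)\ge0$, so the degenerate case is exactly the one where numerator and denominator in \eqref{lftKY} vanish together; for flows this is not exotic (a finite-time exponent tending to zero, e.g.\ the one associated with the flow direction, is the rule on non-equilibrium trajectories), and it is the very regime in which the paper itself observes jumps of the finite-time dimension. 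That the uniform estimate alone cannot suffice is seen abstractly: if $f_t$ is concave piecewise linear with largest zero $D$ and $f_t(D+\delta)=-c/(2t)$, then $g_t=f_t+c/t$ satisfies the same $O(1/t)$ closeness for every $t$ yet has largest zero $\ge D+\delta$ for every $t$, so the liminfs of the largest zeros differ by $\delta$. Hence additive closeness of the rate functions, which is all you have established, does not imply equality of the Lyapunov dimensions.

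What closes the gap is the multiplicative structure you never invoke: the cocycle property together with the Horn inequality $\omega_d(AB)\le\omega_d(A)\,\omega_d(B)$ --- the same ingredients behind \eqref{DOlim}. Take any $d>\dim_{\rm L}(\{\varphi^t\}_{t\ge0},K)$. By \eqref{DOlim} there is $t_0>0$ with $\dim_{\rm L}(\varphi^{t_0},K)<d$, hence $\omega_d(D\varphi^{t_0}(u))<1$ for all $u\in K$ by \eqref{locDOmaptmax}, hence $q:=\max_{u\in K}\omega_d(D\varphi^{t_0}(u))<1$ by compactness and continuity. Invariance of $K$ and submultiplicativity give $\omega_d(D\varphi^{kt_0}(u))\le q^k$, so your two-sided bound yields $\omega_d\bigl(D\varphi_h^{kt_0}(w)\bigr)\le M^{2n}q^k<1$ for all $w\in h(K)$ once $k$ is large enough; thus $\dim_{\rm L}(\varphi_h^{kt_0},h(K))\le d$, and by \eqref{DOlim} applied to $\{\varphi_h^t\}_{t\ge0}$ we get $\dim_{\rm L}(\{\varphi_h^t\}_{t\ge0},h(K))\le d$. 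Letting $d$ decrease to $\dim_{\rm L}(\{\varphi^t\}_{t\ge0},K)$ and repeating the argument with $h^{-1}$ in place of $h$ gives \eqref{dDOunderdiff}. The point is that the geometric decay $q^k$ absorbs the constant $M^{2n}$ outright, with no division by a possibly vanishing exponent.
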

\noindent This property and a proper choice of smooth change of coordinates
may significantly simplify
the computation of the Lyapunov dimension of dynamical system
(see also a discussion in \cite{LeonovK-2015-AMC}).

\subsection{Computation of the Lyapunov dimension}
For numerical computation of the finite-time Lyapunov exponents
there are developed various continuous and discrete algorithms
based on the singular value decomposition (SVD)
of fundamental matrix $D\varphi^t(u)$, which has the form
\(
  D\varphi^t(u)={\rm U}(t,u){\rm \Sigma}(t,u_0){\rm V}^*(t,u).
\)
Here ${\rm U}(t,u)^*{\rm U}(t,u) \equiv I \equiv {\rm V}(t,u)^*{\rm V}(t,u)$
and ${\rm \Sigma}(t)=\text{\rm diag}\{\sigma_1(t,u),...,\sigma_n(t,u)\}$
is a  diagonal matrix with positive real diagonal entries,
which are \emph{singular values} of $D\varphi^t(u)$
(thus the finite-time Lyapunov exponents
can be computed from ${\rm \Sigma}(t)$ according to \eqref{ftLE}).
An implementation of the discrete SVD method
for computing finite-time Lyapunov exponents $\{\LEs_i(t,u_0)\}_1^n$
in MATLAB can be found, e.g. in \cite{LeonovKM-2015-EPJST}.
It should be noted that some other algorithms
(e.g. Benettin's \cite{BenettinGGS-1980-Part2} and Wolf's \cite{WolfSSV-1985} algorithms),
widely used for the Lyapunov exponents computation,
are based on the so-called QR decomposition
and, in general, lead to the computation of the values
called \emph{finite-time Lyapunov exponents of the fundamental matrix columns}
$(y^1(t,u),...,y^n(t,u))=D\varphi^t(u)$
(or \emph{finite-time Lyapunov characteristic exponents}, LCEs)
at the point $u$ in which case
the set $\{\LCEs_i(t,u)\}_{i=1}^n$ ordered by decreasing for $t>0$
is defined as the set
$\{\frac{1}{t}\ln||y^i(t,u)||\}_{i=1}^n$.
The set $\{\LCEs_i(t,u)\}_{i=1}^n$ may significantly differ from the
$\{\LEs_i(t,u)\}_{i=1}^n$ and,
in the general,
$\dim_{\rm L}(\varphi^t, u) =d_{\rm L}^{\rm KY}(\{\LEs_i(t,u)\}_1^n) \neq d_{\rm L}^{\rm KY}(\{\LCEs_i(t,u)\}_1^n)$
\footnote{
In contrast to the definition of the Lyapunov exponents of singular values,
finite-time Lyapunov exponents of fundamental matrix columns
may be different for different fundamental matrices (see, e.g.  \cite{KuznetsovAL-2016}).
To get the set of all possible values of the Lyapunov exponents
of fundamental matrix columns
(the set with the minimal sum of values),
one has to consider the so-called normal fundamental matrices  \cite{Lyapunov-1892}.
Using, e.g, Courant-Fischer theorem, 
it is possible to show that
$\LCEs_1(t,u) =\LEs_1(t,u)$
and $\LEs_i(t,u) \leq \LCEs_i(t,u)$ for $1<i\leq n$.
For example, for the matrix  \cite{KuznetsovAL-2016}
 \(
    X(t)=\left(
      \begin{array}{cc}
        1 & g(t)-g^{-1}(t) \\
        0 & 1 \\
      \end{array}
    \right)
 \)
 we have the following ordered values:
 $  \LCEs_1(X(t)) =
  {\rm max}\big(\limsup\limits_{t \to +\infty}\frac{1}{t}\ln|g(t)|,
  \limsup\limits_{t \to +\infty}\frac{1}{t}\ln|g^{-1}(t)|\big),
  \LCEs_2(X(t)) = 0$;
 $
  \LEs_{1,2}(X(t)) = {\rm max, min}
  \big(
     \limsup\limits_{t \to +\infty}\frac{1}{t}\ln|g(t)|,
     \limsup\limits_{t \to +\infty}\frac{1}{t}\ln|g^{-1}(t)|
  \big).
 $
Thus, in general we have (see, e.g. discussion in  \cite{Kuznetsov-2016-PLA}):
\(
 \dim_{\rm L}(\varphi^t, u) =
 d_{\rm L}^{\rm KY}(\{\LEs_i(t,u)\}_1^n) \leq d_{\rm L}^{\rm KY}(\{\LCEs_i(t,u)\}_1^n).
\)
}.
Also there are known various examples in which the results
of Lyapunov exponents numerical computations
substantially differ from analytical results  \cite{TempkinY-2007,AugustovaBC-2015}.

Applying the statistical physics approach and assuming the ergodicity
(see, e.g.  \cite{KaplanY-1979,Ledrappier-1981,FredericksonKYY-1983,FarmerOY-1983}),
the Lyapunov dimension $\dim_{\rm L} (\{\varphi^t\}_{t \geq 0}, K)$
of attractor $K$ 
are often approximated by the local Lyapunov dimension $\dim_{\rm L} (\varphi^t, u_0)$
and its limit value $\limsup_{t\to+\infty}\dim_{\rm L} (\varphi^t, u_0)$
corresponding to a
trajectory $\{\varphi^t(u_0), t \geq 0 \}$
that 
belongs to the attractor ($u_0 \in K$).
However, from a practical point of view,
the rigorous proof of ergodicity is a challenging task
 \cite{BogoliubovK-1937,DellnitzJ-2002,Oseledec-1968,Ledrappier-1981}
(e.g. even for the well-studied Lorenz system),
which hardly can be done effectively in the general case
(see, e.g. the corresponding discussions
in \cite{BarreiraS-2000},\cite[p.118]{ChaosBook},\cite{OttY-2008},\cite[p.9]{Young-2013}
and the works  \cite{LeonovK-2007}
on the Perron effects of the largest Lyapunov exponent sign reversals).
An example of the effective rigorous use of the ergodic theory
for the estimation of the Hausdorff and Lyapunov dimensions is given, e.g. in  \cite{Schmeling-1998}.
Remark also that even if a numerical visualization of attractor $\widetilde{K}$
is obtained (which is only an approximation of the attractor $K$),
it is not clear how to choose a point on the attractor itself: $u_0 \in K$.

Thus, in general, to estimate the Lyapunov dimension of attractor $K$
according to \eqref{DOmaptmax}
we need \cite{KuznetsovMV-2014-CNSNS,LeonovKM-2015-EPJST}
to localize the attractor $K \subset K^{\varepsilon}$, consider a grid of points $K^{\varepsilon}_{\rm grid}$ on it, and
find the maximum  of corresponding finite-time local Lyapunov dimensions:
$\max\limits_{u \in K^{\varepsilon}_{\rm grid}} \dim_{\rm L}(\varphi^t,u)$.

To avoid the localization of attractors and numerical procedures,
we consider an effective analytical approach, proposed by Leonov
in 1991  \cite{Leonov-1991-Vest} (see also surveys  \cite{LeonovB-1992,Kuznetsov-2016-PLA}).
The Leonov method is based on a combination of the Douady-Oesterl\'{e} approach
with the direct Lyapunov method
and in the work  \cite{Kuznetsov-2016-PLA} it is shown how the method can be derived
from the invariance of the Lyapunov dimension of compact invariant set
with respect to the special smooth change of variables $h: U \subseteq \mathbb{R}^n \to \mathbb{R}^n$
with $Dh(u)=e^{V(u)(j+s)^{-1}}S$, where $V: U \subseteq \mathbb{R}^n \to \mathbb{R}^1$
is a differentiable scalar function
and $S$ is a nonsingular $n \times n$ matrix (see Proposition~\ref{thm:dDOunderdiff}).
Let $\lambda_i(u_0,S)$, $i=1,2,...,n,$ be the eigenvalues of the symmetrized Jacobian matrix
\begin{equation} \label{SJS}
  \frac{1}{2} \left( S J(u(t,u_0)) S^{-1} + (S J(u(t,u_0)) S^{-1})^{*}\right),
\end{equation}
ordered so that $\lambda_1(u_0, S) \ge \cdots \ge \lambda_n(u_0, S)$ for any $u_0 \in U$.

\begin{theorem}\label{theorem:th1}
Let $d=(j+s) \in [1,n]$, where
integer $j=\lfloor d \rfloor \in \{1,\ldots,n\}$
and real $s = (d - \lfloor d \rfloor) \in [0,1)$.
If there exist a differentiable scalar function $V: U \subseteq \mathbb{R}^n \to \mathbb{R}^1$
and a nonsingular $n\times n$ matrix $S$
such that
\begin{equation}\label{ineq:weilSVct}
  \sup_{u \in U} \big( \lambda_1 (u,S) + \cdots + \lambda_j (u,S)
  + s\lambda_{j+1}(u,S) + \dot{V}(u) \big) < 0,
\end{equation}
where $\dot{V} (u) = ({\rm grad}(V))^{*}f(u)$,
then for a compact invariant set $K\subset U$ we have
\[
   \dim_{\rm H}K \leq
    \dim_{\rm L}(\{\varphi^t\}_{t\geq0},K)
   \leq j+s.
\]
\end{theorem}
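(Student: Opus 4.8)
The plan is to reduce the desired bound to an exponential-decay estimate for a singular value function in suitably transformed coordinates, and then to combine the diffeomorphism invariance of the Lyapunov dimension (Proposition~\ref{thm:dDOunderdiff}) with the Douady--Oesterl\'e corollary (Theorem~\ref{DOthm}). First I would pass to the coordinates associated with the differential relation $Dh(u)=e^{V(u)/(j+s)}S$, in which the cocycle identity $D\varphi_h^t(h(u))=Dh(\varphi^t(u))\,D\varphi^t(u)\,Dh(u)^{-1}$ gives $D\varphi_h^t(h(u))=e^{(V(\varphi^t(u))-V(u))/(j+s)}\,S D\varphi^t(u)S^{-1}$. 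Since the singular value function $\omega_d$ is homogeneous of degree $d=j+s$ in the singular values, this yields the clean factorization
\[
  \ln\omega_d\big(D\varphi_h^t(h(u))\big)=\big(V(\varphi^t(u))-V(u)\big)+\ln\omega_d\big(S D\varphi^t(u)S^{-1}\big),
\]
so that it suffices to control the right-hand side along trajectories.

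The analytical core, and the step I expect to be the main obstacle, is the differential inequality
\[
  \frac{d}{dt}\ln\omega_d\big(S D\varphi^t(u)S^{-1}\big)\le \lambda_1(\varphi^t(u),S)+\cdots+\lambda_j(\varphi^t(u),S)+s\,\lambda_{j+1}(\varphi^t(u),S).
\]
To prove it I would set $Y(t)=S D\varphi^t(u)S^{-1}$, note $\dot Y=\big(SJ(\varphi^t(u))S^{-1}\big)Y$ with $B:=SJS^{-1}$, and use exterior powers: $\sigma_1\cdots\sigma_k(Y)=\|\wedge^k Y\|$, where $\wedge^k Y$ solves the linear system driven by the $k$-th additive compound $B^{[k]}$. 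The standard logarithmic-norm estimate bounds $\frac{d}{dt}\ln\|\wedge^k Y\|$ by the largest eigenvalue of the symmetrized $B^{[k]}$, which, since the additive compound is linear and $\tfrac12(B^{[k]}+(B^{[k]})^{*})=\big(\tfrac12(B+B^{*})\big)^{[k]}$, equals the sum of the $k$ largest eigenvalues $\lambda_1(\cdot,S)+\cdots+\lambda_k(\cdot,S)$. Writing $\ln\omega_d=(1-s)\ln(\sigma_1\cdots\sigma_j)+s\ln(\sigma_1\cdots\sigma_{j+1})$ then produces the interpolated bound. The technical care here concerns non-smoothness of the $\sigma_i$ at multiplicities: I would work throughout with the Lipschitz quantities $\ln\|\wedge^k Y\|$ and upper right-hand Dini derivatives rather than with individual singular values.

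Combining the two displays and differentiating gives
\[
  \frac{d}{dt}\ln\omega_d\big(D\varphi_h^t(h(u))\big)\le \dot V(\varphi^t(u))+\lambda_1(\varphi^t(u),S)+\cdots+\lambda_j(\varphi^t(u),S)+s\,\lambda_{j+1}(\varphi^t(u),S),
\]
whose right-hand side, by hypothesis \eqref{ineq:weilSVct} and the invariance $\varphi^\tau(u)\in K\subset U$ for $u\in K$, is uniformly bounded above by some $-\delta<0$. Integrating from $0$ to $t$ (with $\omega_d(I)=1$) yields $\omega_d\big(D\varphi_h^t(h(u))\big)\le e^{-\delta t}$ uniformly for $u\in K$. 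For $t$ large this is $<1$; since $d'\mapsto\ln\omega_{d'}(D\varphi_h^t(h(u)))$ is concave in $d'$ and vanishes at $d'=0$, the bound at $d'=j+s$ forces $\dim_{\rm L}(\varphi_h^t,h(u))<j+s$, hence $\sup_{w\in h(K)}\dim_{\rm L}(\varphi_h^t,w)\le j+s$. Passing to the $\liminf$ as in \eqref{DOinf} gives $\dim_{\rm L}(\{\varphi_h^t\}_{t\ge0},h(K))\le j+s$; Proposition~\ref{thm:dDOunderdiff} transfers this to $\dim_{\rm L}(\{\varphi^t\}_{t\ge0},K)\le j+s$, and Theorem~\ref{DOthm} supplies $\dim_{\rm H}K\le\dim_{\rm L}(\{\varphi^t\}_{t\ge0},K)$.

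A final point requiring care is that the matrix field $u\mapsto e^{V(u)/(j+s)}S$ need not be an exact Jacobian, so a globally defined $h$ may fail to exist. The rigorous reading is that only the singular-value transformation law in the first display is actually used: one measures lengths in the Riemannian metric $\langle\cdot,\cdot\rangle_u=e^{2V(u)/(j+s)}\langle S\,\cdot,S\,\cdot\rangle$ and applies the metric (generalized) form of the Douady--Oesterl\'e estimate, of which Proposition~\ref{thm:dDOunderdiff} is the constant-$V$, constant-$S$ instance; this is where I would be most careful to keep the argument honest.
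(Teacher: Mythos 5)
Your argument is correct and is essentially the paper's own route: the paper does not prove Theorem~\ref{theorem:th1} itself but attributes it to Leonov's method as derived in \cite{Kuznetsov-2016-PLA}, which is precisely the combination you use — the change of variables with $Dh(u)=e^{V(u)(j+s)^{-1}}S$, the cocycle/homogeneity factorization of $\omega_{j+s}$, the compound-matrix (logarithmic-norm) differential inequality for $\lambda_1+\cdots+\lambda_j+s\lambda_{j+1}$, integration, and the transfer back via Proposition~\ref{thm:dDOunderdiff} and Theorem~\ref{DOthm}. Your closing caveat is also well taken and repairs the only gloss in the paper's sketch: since $e^{V(u)/(j+s)}S$ need not be integrable to a genuine diffeomorphism, one must either invoke the Riemannian-metric form of the Douady--Oesterl\'{e} estimate, as you do, or (more elementarily) use compactness of $K$, boundedness of $V$ on $K$, and nonsingularity of the fixed $S$ to convert the exponential decay of $e^{V(\varphi^t(u))-V(u)}\,\omega_{j+s}\bigl(SD\varphi^t(u)S^{-1}\bigr)$ directly into uniform decay of $\omega_{j+s}\bigl(D\varphi^t(u)\bigr)$ on $K$, after which \eqref{DOinf} yields the bound without ever constructing $h$.
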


\noindent This theorem allows one to estimate
the singular values in the Lyapunov dimension
by the eigenvalues of symmetrized Jacobian matrix.
The proper choice of function $\dot{V}(u)$
allows one to simplify the estimation of the partial sum of eigenvalues and
the nonunitary nonsingular matrix $S$ (i.e. $S^{-1} \neq S^{*}$)
is used to make it possible the analytical computation
of the eigenvalues.
In Theorem~\ref{theorem:th1} the constancy of the signs of $V(u)$ or $\dot V(u)$ is not required.
A generalization of the above result for the discrete-time dynamical systems
can be found in  \cite{Kuznetsov-2016-PLA}.
Additionally,
if a localization of invariant set $K$ is known: $K \subset K^{\varepsilon} \subset U$,
then one can check \eqref{ineq:weilSVct} on $K^{\varepsilon}$ only.
Also we can consider
the Kaplan-Yorke formula with respect to the ordered set of eigenvalues of the symmetrized Jacobian matrix: 
\(
  d_{\rm L}^{\rm KY}(\{\lambda_{j}(u,S)\}_{i=1}^n),
\)
and its supremum on the set $K$ gives an upper estimation of the finite-time Lyapunov
dimension.

\begin{proposition}\label{thm:dLKYeig}
For a compact invariant set $K$ and any nonsingular $n\times n$ matrix $S$
we have
\begin{equation}\label{dLKYeig}
 \dim_{\rm H}K \leq \dim_{\rm L}(\{\varphi^t\}_{t\geq0},K) \leq
 \sup_{u \in K}d_{\rm L}^{\rm KY}\big(\{\lambda_{j}(u,S)\}_{i=1}^n\big).
\end{equation}
\end{proposition}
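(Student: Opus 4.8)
The left-hand inequality is precisely \eqref{HDOinf}, so the plan is to establish only the upper bound $\dim_{\rm L}(\{\varphi^t\}_{t\geq0},K)\le d^{*}$, where I abbreviate $d^{*}:=\sup_{u\in K}d_{\rm L}^{\rm KY}\big(\{\lambda_i(u,S)\}_{i=1}^{n}\big)$. For $d\in[0,n]$ and $u\in K$ it is convenient to introduce the partial sum $\Phi_d(u,S)=\lambda_1(u,S)+\cdots+\lambda_{\lfloor d\rfloor}(u,S)+(d-\lfloor d\rfloor)\,\lambda_{\lfloor d\rfloor+1}(u,S)$, which plays for the symmetrized spectrum the role that $\tfrac1t\ln\omega_d(D\varphi^t(u))$ plays in \eqref{locDOmaptmax}. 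Since the Kaplan--Yorke formula \eqref{lftKY} presumes $1\le j<n$, the relevant regime is $d^{*}\in[1,n)$; if $d^{*}=n$ the statement reduces to the trivial bound $\dim_{\rm H}K\le\dim_{\rm L}(\{\varphi^t\}_{t\geq0},K)\le n$, so I would assume $1\le d^{*}<n$ from now on.

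First I would record a pointwise reformulation of the Kaplan--Yorke value that mirrors \eqref{locDOmaptmax}. The successive slopes $\lambda_{m+1}(u,S)$ of the piecewise-linear function $d\mapsto\Phi_d(u,S)$ are nonincreasing and $\Phi_0(u,S)=0$, so this function is continuous and concave; then, exactly as in the equivalence of \eqref{lftKY} with \eqref{locDOmaptmax}, one checks that $d_{\rm L}^{\rm KY}(\{\lambda_i(u,S)\}_{i=1}^{n})=\max\{d\in[0,n]:\Phi_d(u,S)\ge0\}=:d(u)$. Two consequences are immediate and form the engine of the proof: $d(u)\le d^{*}$ for every $u\in K$, and, by the very definition of the maximum, $\Phi_d(u,S)<0$ strictly whenever $d>d(u)$.

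The final step is to feed this into Theorem~\ref{theorem:th1} with $V\equiv0$. Fix any $\varepsilon>0$ with $d^{*}+\varepsilon\le n$. For each $u\in K$ we have $d^{*}+\varepsilon>d^{*}\ge d(u)$, so the strict pointwise inequality above gives $\Phi_{d^{*}+\varepsilon}(u,S)<0$. Because the eigenvalues $\lambda_i(\cdot,S)$ of the symmetrized Jacobian depend continuously on the point, the map $u\mapsto\Phi_{d^{*}+\varepsilon}(u,S)$ is continuous and everywhere negative on the compact set $K$, whence $\sup_{u\in K}\Phi_{d^{*}+\varepsilon}(u,S)<0$. Writing $d^{*}+\varepsilon=j+s$ and noting $\dot V\equiv0$, this is exactly hypothesis \eqref{ineq:weilSVct} verified on $K$; invoking Theorem~\ref{theorem:th1} (checking \eqref{ineq:weilSVct} on $K$ itself, as permitted for a localized invariant set) yields $\dim_{\rm L}(\{\varphi^t\}_{t\geq0},K)\le d^{*}+\varepsilon$, and letting $\varepsilon\downarrow0$ completes the bound.

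The hard part, and the only genuine subtlety, is the mismatch between the strict inequality \eqref{ineq:weilSVct} demanded by Theorem~\ref{theorem:th1} and the merely non-strict bound $\Phi_{d^{*}}(u,S)\le0$ that the definition of $d^{*}$ delivers (equality can occur, for instance at a point realizing the supremum, or when some $\lambda_{j+1}(u,S)$ vanishes). The characterization $d(u)=\max\{d:\Phi_d(u,S)\ge0\}$ is precisely what dissolves this difficulty: passing from $d^{*}$ to $d^{*}+\varepsilon$ converts the non-strict bound at $d^{*}$ into a \emph{strict} pointwise bound at $d^{*}+\varepsilon$, and compactness of $K$ together with continuity of $\lambda_i(\cdot,S)$ promotes this pointwise strictness to uniform strict negativity of the supremum, so that Theorem~\ref{theorem:th1} applies verbatim. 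The boundary case $d^{*}=n$ is absorbed by the trivial estimate $\dim_{\rm L}\le n$.
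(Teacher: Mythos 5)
Your proposal is correct and takes essentially the approach the paper intends: the paper states Proposition~\ref{thm:dLKYeig} without a written proof, placing it immediately after Theorem~\ref{theorem:th1} as a consequence of applying the Kaplan--Yorke construction to the eigenvalues $\lambda_i(u,S)$ (with only a citation to Douady--Oesterl\'{e} and Smith for the underlying idea). Your argument --- recasting $d_{\rm L}^{\rm KY}$ as $\max\{d\in[0,n]:\Phi_d(u,S)\ge 0\}$ via concavity, passing from $d^{*}$ to $d^{*}+\varepsilon$ to convert the non-strict bound into pointwise strict negativity, using compactness of $K$ and continuity of the ordered eigenvalues to obtain $\sup_{u\in K}\Phi_{d^{*}+\varepsilon}(u,S)<0$, and then invoking Theorem~\ref{theorem:th1} with $V\equiv 0$ checked on the invariant set $K$ --- is a correct and complete filling-in of exactly that derivation, including the edge case $d^{*}=n$.
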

\noindent This is a generalization of ideas, discussed  e.g. in  \cite{DouadyO-1980,Smith-1986},
on the Hausdorff dimension estimation by the eigenvalues of symmetrized Jacobian matrix.




Since the function $u \mapsto \dim_{\rm L}(\varphi^t,u)$
is upper semi-continuous (see, e.g.  \cite[p.554]{Gelfert-2003}),
for each $t\geq 0$ there exists a critical point $u_{\rm L}(t) \in K$, which may be not unique,
such that $\sup_{u \in K}\dim_{\rm L}(\varphi^t,u)=\dim_{\rm L}(\varphi^t,u_{\rm L}(t))$.
An essential question (see discussion in  \cite[p.2146]{Kuznetsov-2016-PLA}) is
{\it whether there exists a critical path
$\gamma^{cr}=\{\varphi^t_{\rm GD}(u^{cr}_0),\ t\geq0 \}$
such that for each $t\geq0$
one of the corresponding critical points belongs to the critical path:
$\varphi^t_{\rm GD}(u^{cr}_0) = u_{\rm L}(t)$,
and, if so, whether the critical path is
an equilibrium or a periodic solution}.
The last part of the question was formulated
in  \cite[p.98, Question 1]{Eden-1990}\footnote{
  Another approach for the introduction of the Lyapunov dimension of dynamical system
  was developed by Constantin, Eden, Foia\c{s}, and Temam
   \cite{ConstantinFT-1985,Eden-1990,EdenFT-1991}.
  In the definition of the Lyapunov dimension of
  the dynamical system $\{\varphi^t\}_{t\geq0}$
  (see \eqref{DOmaptmax})
  they consider $\big(\omega_{d}(D\varphi^t(u))\big)^{1/t}$
  instead of $\omega_{d}(D\varphi^t(u))$
  and apply the theory of positive operators
  to prove the existence of a critical point
  $u^{cr}_{\rm E}$ (which may be not unique),
  where the corresponding global Lyapunov dimension
  achieves maximum (see  \cite{Eden-1990}):
  \(
  \dim_{\rm L}^{\rm E}(\{\varphi^t\}_{t\geq0},K)
  =\inf\{d \in [0,n]:
  \lim\limits_{t \to +\infty} \max\limits_{u \in K}\ln\big(\omega_{d}(D\varphi^t(u))\big)^{1/t}<0\} =
  \inf\{d \in [0,n]:
  \limsup\limits_{t \to +\infty}\ln\big(\omega_{d}(D\varphi^t(u^{cr}_{\rm E}))\big)^{1/t}<0\}
  =
  \dim_{\rm L}^{\rm E}(\{\varphi^t\}_{t\geq0},u^{cr}_{\rm E}),
  \)
  and, thus, rigorously justify the usage of the local Lyapunov dimension
  $\dim_{\rm L}^{\rm E}(\{\varphi^t\}_{t\geq0},u)$.
  Although it may seem that this definition allows to reduce
  computational complexity
  (since the supremum over the set $K$ has to be computed only once for $t=+\infty$)
  as compared with the definition of \eqref{DOinf}
  (where the supremum has to be computed for each $t \in (0,+\infty)$),
  it does not have a clear sense for a finite-time interval $(0,T)$,
  which can only be considered in numerical experiments.
  Remark also that $\dim_{\rm L}(\varphi^t, K)$,
  according to the Douady--Oesterl\'{e} theorem,
  has clear sense for any fixed $t$ and, thus,
  in numerical experiments it can be computed, according to \eqref{DOlim},
  only for sufficiently large time $t=T$
  (i.e the supremum over the set $K$ is computed only once for $t=T$).
}.
A \emph{conjecture on the Lyapunov dimension of self-excited attractors} \cite{KuznetsovL-2016-ArXiv} is that \emph{for ''typical'' systems
the Lyapunov dimension of self-excited attractor
is less then the Lyapunov dimension of one of the unstable equilibria,
the unstable manifold of which intersects with the basin of attraction}.
Next corollary addresses the question and conjecture
and is used to get the \emph{exact Lyapunov dimension}
(this term was suggested in  \cite{DoeringGHN-1987})
for the global attractors, which involves equilibria.

\begin{corollary} \label{cor:equil}
  If for $d=j+s$,  defined by Theorem~\ref{theorem:th1}
  (i.e. for $d: \dim_{\rm H}K \leq d$),
  at an equilibrium point $u_{eq}^{cr}$
  ($u_{eq}^{cr} = \varphi^{t}(u_{eq}^{cr})$ for any $t \geq 0$)
  the relation
  \[
    \dim_{\rm L}(\{\varphi^t\}_{t\geq0},u_{eq}^{cr}) = j+s
  \]
  holds, then
  for any compact invariant set $K \supset u_{eq}^{cr}$
  we get the exact Lyapunov dimension formula
  \[
  \begin{aligned}
   \dim_{\rm L}(\{\varphi^t\}_{t\geq0},K)
   =\dim_{\rm L}(\{\varphi^t\}_{t\geq0},u_{eq}^{cr}) = j + s.
  \end{aligned}
  \]
\end{corollary}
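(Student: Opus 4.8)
The plan is to prove the asserted identity by a squeeze argument, producing a matching upper and lower bound for $\dim_{\rm L}(\{\varphi^t\}_{t\geq0},K)$, both equal to $j+s$. The upper bound is essentially handed to us by Theorem~\ref{theorem:th1}: the phrase ``$d=j+s$ defined by Theorem~\ref{theorem:th1}'' means precisely that there exist a differentiable $V$ and a nonsingular $S$ for which the strict inequality \eqref{ineq:weilSVct} holds with this $d$. Hence, applying Theorem~\ref{theorem:th1} to the compact invariant set $K\subset U$, I obtain at once
\[
  \dim_{\rm H}K \ \leq\ \dim_{\rm L}(\{\varphi^t\}_{t\geq0},K) \ \leq\ j+s .
\]

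For the lower bound, the key observation is that the equilibrium $\{u_{eq}^{cr}\}$ is itself a compact invariant set contained in $K$: it is a single point, hence compact, and $\varphi^t(u_{eq}^{cr})=u_{eq}^{cr}$ for all $t\geq0$ makes it invariant. I would first record the monotonicity of the finite-time supremum under this inclusion: since $u_{eq}^{cr}\in K$, for every fixed $t>0$ we have $\sup_{u\in K}\dim_{\rm L}(\varphi^t,u)\geq \dim_{\rm L}(\varphi^t,u_{eq}^{cr})$. Passing to the $\liminf$ as $t\to+\infty$ on both sides, and using that $\liminf$ preserves a pointwise inequality together with the definition \eqref{DOinf}, I get
\[
  \dim_{\rm L}(\{\varphi^t\}_{t\geq0},K)
  \ \geq\ \liminf_{t\to+\infty}\dim_{\rm L}(\varphi^t,u_{eq}^{cr})
  \ =\ \dim_{\rm L}(\{\varphi^t\}_{t\geq0},u_{eq}^{cr}) \ =\ j+s,
\]
the last equality being exactly the hypothesis of the corollary. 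Combining this with the upper bound forces equality throughout, which is the asserted exact-dimension formula.

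The proof is short precisely because the hypothesis already carries the hard half, namely that the limiting local Lyapunov dimension of the equilibrium equals $j+s$; the only genuine subtlety here is conceptual, ensuring that $\liminf$ and $\sup$ interact correctly. I note that the argument needs only the \emph{one-sided} passage of a pointwise inequality through the $\liminf$, so no interchange of limit and supremum is required. I would close by remarking how the hypothesis $\dim_{\rm L}(\{\varphi^t\}_{t\geq0},u_{eq}^{cr})=j+s$ is verified in practice: at an equilibrium the linearization has constant coefficients, so $D\varphi^t(u_{eq}^{cr})=e^{J(u_{eq}^{cr})t}$, whose finite-time Lyapunov exponents tend to the ordered real parts of the eigenvalues of $J(u_{eq}^{cr})$; applying the Kaplan--Yorke formula \eqref{lftKY} to these reproduces $j+s$, showing that the upper estimate of Theorem~\ref{theorem:th1} is attained exactly at the equilibrium and thereby justifying the exactness.
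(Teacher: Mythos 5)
Your proof is correct and is essentially the argument the paper intends: the paper states this corollary without a separate proof precisely because it follows from the upper bound of Theorem~\ref{theorem:th1} applied to $K$ together with the monotonicity $\sup_{u\in K}\dim_{\rm L}(\varphi^t,u)\geq \dim_{\rm L}(\varphi^t,u_{eq}^{cr})$ (valid since $u_{eq}^{cr}\in K$) passed through the $\liminf$ in definition \eqref{DOinf} --- exactly your squeeze. Your closing remark on verifying the hypothesis at an equilibrium is also consistent with the paper, which handles that step via Proposition~\ref{ueqcr}.
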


Next statement is used to compute the \emph{Lyapunov dimension}
at an equilibrium with the help of the corresponding eigenvalues.
\begin{proposition}\label{ueqcr}
Suppose that at one of the equilibrium points of the dynamical system $\{\varphi^t\}_{t\geq0}$:
$u_{eq}\equiv\varphi^t(u_{eq})$, $u_{eq} \in U$,
the matrix $J(u_{eq})$ has simple real eigenvalues:
$\{\lambda_i(u_{eq})\}_{i=1}^{n}$, $\lambda_{i}(u_{eq}) \geq \lambda_{i+1}(u_{eq})$.
Then
\[
  \dim_{\rm L}(\{\varphi^t\}_{t\geq0},u_{eq})
  =
  d_{\rm L}^{\rm KY}(\{\lambda_i(u_{eq})\}_{i=1}^n).
\]
\end{proposition}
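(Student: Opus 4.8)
The plan is to exploit the fact that at an equilibrium the variational equation \eqref{sfl} has constant coefficients, so that the fundamental matrix is simply the matrix exponential $D\varphi^t(u_{eq}) = e^{tJ(u_{eq})}$, and then to read off the asymptotics of its singular values from the spectrum of $J(u_{eq})$. First I would use the hypothesis that $J := J(u_{eq})$ has $n$ simple real eigenvalues. Simplicity guarantees that $J$ is diagonalizable over $\mathbb{R}$: there is a nonsingular matrix $P$ with $J = P\Lambda P^{-1}$, $\Lambda = \mathrm{diag}(\lambda_1,\dots,\lambda_n)$ and $\lambda_1 \geq \cdots \geq \lambda_n$. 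Consequently $e^{tJ} = P\,e^{t\Lambda}\,P^{-1}$, where $e^{t\Lambda}$ is diagonal with positive entries, so that its singular values are exactly $\{e^{t\lambda_i}\}$, already ordered decreasingly.

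The key step is then to control the effect of the (generally non-unitary) change of basis $P$ on the singular values of the product. Using the multiplicative singular-value inequalities $\sigma_n(A)\sigma_i(B)\le \sigma_i(AB)\le\sigma_1(A)\sigma_i(B)$ for invertible $A$, applied twice to $P\,e^{t\Lambda}\,P^{-1}$, I would obtain the two-sided bound
\[
  \kappa(P)^{-1}\,e^{t\lambda_i} \;\le\; \sigma_i(t,u_{eq}) \;\le\; \kappa(P)\,e^{t\lambda_i},\qquad i=1,\dots,n,
\]
with the condition number $\kappa(P)=\sigma_1(P)\,\sigma_1(P^{-1})$ independent of $t$. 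Taking logarithms and dividing by $t$ in \eqref{ftLE} yields $|\LEs_i(t,u_{eq})-\lambda_i|\le t^{-1}\ln\kappa(P)$, so that the finite-time Lyapunov exponents converge to the eigenvalues, $\LEs_i(t,u_{eq})\to\lambda_i$ as $t\to+\infty$. Since the $\lambda_i$ are distinct, for all large $t$ the ordering of the singular values agrees with the ordering of the $\lambda_i$, and the index $j(t,u_{eq})$ appearing in \eqref{lftKY} stabilizes to the value $j$ determined by the spectrum $\{\lambda_i\}$.

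Finally I would invoke the identity $\dim_{\rm L}(\varphi^t,u_{eq})=d_{\rm L}^{\rm KY}(\{\LEs_i(t,u_{eq})\}_{i=1}^n)$ recorded after \eqref{lftKY} together with the definition \eqref{DOinf}, which for the single-point set reads $\dim_{\rm L}(\{\varphi^t\}_{t\geq0},u_{eq})=\liminf_{t\to+\infty}\dim_{\rm L}(\varphi^t,u_{eq})$. Because $d_{\rm L}^{\rm KY}$ is a continuous function of the ordered exponents on the region where the stabilized index $j$ is fixed and the denominator $|\LEs_{j+1}|$ stays bounded away from zero, passing to the limit through the convergent exponents gives $d_{\rm L}^{\rm KY}(\{\LEs_i(t,u_{eq})\}_{i=1}^n)\to d_{\rm L}^{\rm KY}(\{\lambda_i\}_{i=1}^n)$, which is precisely the asserted formula.

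The main obstacle is the key step above: ensuring that the non-unitary diagonalizing transformation $P$ perturbs each singular value only by the $t$-independent factor $\kappa(P)$, so that its logarithmic contribution is washed out under the $1/t$ scaling in \eqref{ftLE} (were $J$ merely similar to a Jordan form, one would instead have to check that the resulting polynomial-in-$t$ factors are likewise sub-exponential). A secondary point requiring care is the borderline behaviour of the Kaplan–Yorke quotient, namely the degenerate cases $\lambda_{j+1}=0$ or $\sum_{i\le j}\lambda_i=0$; there one falls back directly on the definition \eqref{locDOmaptmax} through the singular value function $\omega_d$ rather than on the quotient form \eqref{lftKY}, and the convergence $\sigma_i(t,u_{eq})\,e^{-t\lambda_i}\in[\kappa(P)^{-1},\kappa(P)]$ again suffices to identify the limiting value.
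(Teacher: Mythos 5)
Your argument is correct in its main line and rests on the same core fact as the paper's proof --- diagonalizability of $J(u_{eq})$ by a nonsingular, generally non-unitary, matrix --- but the execution is genuinely different. The paper's proof is a two-liner: apply the invariance of the Lyapunov dimension under diffeomorphisms (Proposition~\ref{thm:dDOunderdiff}) to the linear change of coordinates $w=Su$ chosen so that $SJ(u_{eq})S^{-1}={\rm diag}(\lambda_1,\dots,\lambda_n)$; in the new coordinates the fundamental matrix at the equilibrium is exactly $e^{t\Lambda}$, whose singular values are exactly $e^{t\lambda_i}$, so $\dim_{\rm L}(S\varphi^t,Su_{eq})=d_{\rm L}^{\rm KY}(\{\lambda_i\}_{i=1}^n)$ for \emph{every} $t>0$, with no limits and no asymptotics. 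You instead stay in the original coordinates and absorb the similarity $P$ into the time-independent condition number $\kappa(P)$, which the $1/t$ scaling kills; in effect you re-prove, for this linear special case, the invariance property that the paper simply cites. What your route buys is self-containedness and explicit quantitative bounds; what the paper's route buys is exactness at every finite $t$, which makes the degenerate spectral configurations discussed below a non-issue.

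The one genuine soft spot is your treatment of the degenerate cases. First, the case $\lambda_{j+1}=0$ that you list cannot occur: since $j=\max\{m:\sum_{i\le m}\lambda_i\ge0\}$, one automatically has $\lambda_{j+1}<0$ whenever $j<n$, so the Kaplan--Yorke denominator is never zero. The real danger is a flat zero segment of the limiting concave piecewise-linear function $d\mapsto\sum_{i\le\lfloor d\rfloor}\lambda_i+(d-\lfloor d\rfloor)\lambda_{\lfloor d\rfloor+1}$, which (by concavity and the ordering of the $\lambda_i$) can only happen when $\lambda_1=\dots=\lambda_k=0$ for some $k\ge1$. Take $\lambda_1=0>\lambda_2$, so that $d_{\rm L}^{\rm KY}=1$: your two-sided bound gives only $\sigma_1(t,u_{eq})\in[\kappa(P)^{-1},\kappa(P)]$, which is consistent with $\sigma_1(t,u_{eq})<1$ for all $t$, i.e.\ with $\dim_{\rm L}(\varphi^t,u_{eq})=0$ for all $t$; so the condition-number bounds alone cannot ``identify the limiting value'' as claimed. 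The repair is cheap: by Weyl's inequality the product of the $m$ largest singular values of $e^{tJ}$ dominates the product of the $m$ largest eigenvalue moduli,
\begin{equation*}
  \sigma_1(t,u_{eq})\cdots\sigma_m(t,u_{eq})\;\ge\;e^{t(\lambda_1+\cdots+\lambda_m)},\qquad m=1,\dots,n,
\end{equation*}
which restores the missing exact one-sided bound on $\omega_m(D\varphi^t(u_{eq}))$ and settles every degenerate case. Alternatively, follow the paper's coordinate-change route, for which the issue never arises.
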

\noindent The proof follows from the invariance of the Lyapunov dimension
and the fact that in this case there exists a nonsingular matrix $S$ such that
\(
  SJ(u_{eq})S^{-1}={\rm diag}\big(\lambda_1(u_{eq}),..,\lambda_n(u_{eq})\big)
\)
and $\dim_{\rm L}(S\varphi^t,Su_{eq}) \equiv d_{\rm L}^{\rm KY}(\{\lambda_i(u_{eq})\}_{i=1}^n)$
for any $t>0$.

For the study of continuous-time dynamical system in $\mathbb{R}^3$,
which possesses an absorbing ball
(i.e. dissipative in the sense of Levinson),
the following result \cite{Leonov-1991-Vest} is useful.
Consider a certain open set $K^{\varepsilon} \subset U$, which is diffeomorphic
to a ball, whose boundary $\partial \overline{K^{\varepsilon}}$
is transversal to the vectors $f(u)$, $u \in \partial \overline{K^{\varepsilon}}$.
Let the set $K^{\varepsilon}$ be a positively invariant for the solutions of system \eqref{eq:ode},
i.e. $\varphi^t(K^{\varepsilon}) \subset K^{\varepsilon}$, $t \geq 0$.
\begin{theorem}
\label{theorem:th2}
Suppose, a continuously differentiable function $V: U \subseteq \mathbb{R}^3 \to \mathbb{R}^1$ and
a non-degenerate matrix $S$ exist such that
\begin{equation}\label{pmm36}
  \sup_{u \in K^{\varepsilon}} \big(
  \lambda_1(u,S)+\lambda_2(u,S)+\dot V(u)
  \big)
  <0.
\end{equation}
Then $\varphi^t(u_0)$ with any initial data $u_0\in K^{\varepsilon}$
tends to the stationary set of dynamical system $\{\varphi^t\}_{t\geq0}$ as $t\to+\infty$.
\end{theorem}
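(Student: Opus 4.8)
The plan is to read the hypothesis \eqref{pmm36} as an exponential contraction condition for two-dimensional areas, and then to combine this with the topology of $K^{\varepsilon}$ (a ball whose boundary the flow crosses inward) in order to rule out every nontrivial recurrent motion. First I would pass, exactly as in the construction preceding Theorem~\ref{theorem:th1}, to coordinates $w=h(u)$ with $Dh(u)=e^{V(u)/2}S$, which is the case $d=j+s=2$. In these coordinates $\lambda_1(u,S)+\lambda_2(u,S)+\dot V(u)$ is the sum of the two largest eigenvalues of the symmetrized Jacobian of the transformed vector field, so \eqref{pmm36} says precisely that this sum is bounded above by some $-\delta<0$ on the compact set $\overline{K^{\varepsilon}}$.

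The next step is an area-contraction lemma. For the linearized flow $\dot v=J v$ along a trajectory, the logarithmic growth rate of a two-dimensional area element $\|v_1\wedge v_2\|$ is bounded above by the sum of the two largest eigenvalues of $\tfrac12(J+J^{*})$; hence, in the weighted metric induced by $S$ and the factor $e^{V}$, every evolving surface element $\Sigma_t=\varphi^t(\Sigma_0)$ satisfies $\tfrac{d}{dt}\,\mathrm{area}(\Sigma_t)\le-\delta\,\mathrm{area}(\Sigma_t)$, so that $\mathrm{area}(\Sigma_t)\to 0$ exponentially. This is the three-dimensional analogue, via the second compound of $D\varphi^t$, of the Bendixson--Dulac area estimate, and it is here that \eqref{pmm36} is used in full strength.

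I would then argue by contradiction. Since the system is dissipative in the sense of Levinson and $K^{\varepsilon}$ is positively invariant, every $\omega$-limit set $\omega(u_0)$, $u_0\in K^{\varepsilon}$, is a nonempty compact connected invariant subset of $K^{\varepsilon}$. Suppose $\omega(u_0)$ contains a point that is not an equilibrium; then it contains a whole nontrivial orbit, from which one extracts a recurrent structure. Because $K^{\varepsilon}$ is diffeomorphic to a ball, any such (almost) closed loop bounds a two-dimensional surface lying in $K^{\varepsilon}$; flowing this spanning surface forward, recurrence forces its weighted area to return arbitrarily close to its initial value, while the area-contraction lemma forces that area to decay to zero -- a contradiction. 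Hence $\omega(u_0)$ consists solely of equilibria, and therefore $\varphi^t(u_0)$ tends to the stationary set as $t\to+\infty$.

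The main obstacle is this generalized Bendixson step: constructing the spanning surface inside the positively invariant ball and controlling the evolution of its weighted area rigorously, and, more delicately, upgrading \emph{no periodic orbits} to \emph{every trajectory converges to the stationary set}. The latter requires excluding nontrivial recurrence and homoclinic/heteroclinic chaining, for which the transversality of $f$ on $\partial\overline{K^{\varepsilon}}$ and the connectedness of $\omega$-limit sets, together with the exponential area decay, must be used together.
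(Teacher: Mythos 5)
First, a point of reference: the paper itself does not prove Theorem~\ref{theorem:th2}; it states the result with a citation to \cite{Leonov-1991-Vest} (see also \cite{BoichenkoLR-2005,Smith-1986}). So your proposal can only be compared with the argument in that cited literature, and your outline does follow its standard strategy: read \eqref{pmm36} as uniform exponential contraction of two-dimensional areas in the metric weighted by $e^{V}$ and $S$, then combine this with a Bendixson-type topological argument inside the ball $K^{\varepsilon}$. Your first two steps are essentially correct: the logarithmic derivative of a weighted area element along the linearized flow is bounded above by $\lambda_1(u,S)+\lambda_2(u,S)+\dot V(u)$ (the second-compound estimate), so \eqref{pmm36} gives $\mathrm{area}(\Sigma_t)\le e^{-\delta t}\,\mathrm{area}(\Sigma_0)$ in the weighted metric. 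One technical remark: a map $h$ with $Dh(u)=e^{V(u)/2}S$ need not exist (such a matrix field is in general not the Jacobian of any diffeomorphism), but this is harmless for you, since your area estimate uses only the scalar weight $e^{V}$ along trajectories rather than a genuine coordinate change.

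The genuine gap is the contradiction step, and it is precisely the step that constitutes the theorem. The sentence ``recurrence forces its weighted area to return arbitrarily close to its initial value'' is not a correct mechanism: the evolved surface $\Sigma_t=\varphi^t(\Sigma_0)$ is a different surface at each time, and nothing forces its area to recur --- on the contrary, it decays. The correct mechanism is a \emph{lower} bound: for a periodic orbit $\gamma$, invariance of $\gamma$ means $\varphi^t(\Sigma_0)$ spans $\gamma$ for all $t$, while the infimum of areas of surfaces spanning a fixed nondegenerate closed curve is strictly positive, contradicting the exponential decay. For a nonwandering point that is not an equilibrium --- which is what membership $p\in\omega(u_0)$ actually gives you --- the orbit segment only approximately closes; the loop must be closed by a small transversal arc, it is \emph{not} invariant, and the boundary of its spanning surface moves under the flow, so one needs the quantitative closing construction together with a spanning-area lower bound that is uniform over nearby nondegenerate loops (this is the content of the arguments of R.~A.~Smith \cite{Smith-1986} and of Li and Muldowney). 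Finally, upgrading ``no invariant closed curves / no nontrivial recurrence'' to ``every trajectory tends to the stationary set'' is itself a theorem (an autonomous convergence theorem), not a routine consequence of the compactness and connectedness of $\omega$-limit sets. You candidly list exactly these points as ``the main obstacle,'' but they are not side issues to be flagged: they are the proof, and without them the proposal establishes only the area-contraction lemma.
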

\noindent In this case the minimal attracting invariant set $K \subset K^{\varepsilon}$
consists of equilibria and in the case of a finite set of equilibrium points in the system
we have $\dim_{\rm H}K=0$.

\section{Main results. Analytical estimations of the Lyapunov dimension of G-D system}

Let $u = (x,\, y, \, z) \in U = \mathbb{R}^3$,
$\{\varphi_{\rm GD}^t\}_{t \geq 0}$ is the dynamical system generated by
\eqref{sys:lorenz-general} with positive parameters $\sigma$,$r$,$A$,$b$,
and $K \subset \mathbb{R}^3$ is a compact invariant set of $\{\varphi_{\rm GD}^t\}_{t \geq 0}$.

By Theorems \ref{theorem:th1} and \ref{theorem:th2}
it can be formulated the assertion on
the Lyapunov dimension of $\{\varphi_{\rm GD}^t\}_{t \geq 0}$.

\begin{theorem}\label{theorem:th3}
Suppose that either the inequality $b < 1$ or the inequalities
$b \geqslant 1$, $\sigma > b$ are valid. \\
If
\begin{equation}
  \left(r+\frac{\sigma}{A}\right)^2 < \frac{(b + 1)(b +
\sigma)}{A}, \label{cond:param1}
\end{equation}
then $\varphi_{\rm GD}^t(u)$ with any $u \in U$
tends to an equilibrium as $t \rightarrow +\infty$
(i.e. the minimal attractive set of $\{\varphi_{\rm GD}^t\}_{t \geq 0}$
is a set of equilibria and its Hausdorff dimension is zero). \\
If
\begin{equation}
  \left(r+\frac{\sigma}{A}\right)^2 \geq \frac{(b + 1)(b +
\sigma)}{A}, \label{cond:param2}
\end{equation}
then for any compact invariant set $K$ of $\{\varphi^t_{\rm GD}\}_{t \geq 0}$ we have
\begin{equation}
  \dim_{\rm L} (\{\varphi^t_{\rm GD}\}_{t \geq 0}, K) \leq 3 - \frac{2(\sigma + b + 1)}{\sigma + 1 +
\sqrt{(\sigma - 1)^2 + A\left(\frac{\sigma}{A}+r\right)^2}}.
  \label{ineq:lyap-dim}
\end{equation}

\end{theorem}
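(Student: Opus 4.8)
The plan is to invoke Theorem~\ref{theorem:th1} and Theorem~\ref{theorem:th2} with a carefully chosen diagonal matrix $S$ and scalar function $V$. First I would record the Jacobian of \eqref{sys:lorenz-general},
\[
  J(u)=\begin{pmatrix} -\sigma & \sigma-Az & -Ay\\ r-z & -1 & -x\\ y & x & -b\end{pmatrix},
\]
and note that $\mathrm{tr}\,J(u)=-(\sigma+b+1)$ is constant, so the three eigenvalues of every symmetrization of $SJS^{-1}$ sum to $-(\sigma+b+1)$. The decisive choice is $S=\mathrm{diag}(1,\sqrt{A},\sqrt{A})$: with it the two off-diagonal entries of $\tfrac12\big(SJS^{-1}+(SJS^{-1})^{*}\big)$ carrying $x$ and $y$ cancel identically, and the symmetrized matrix splits into the scalar $-b$ and the block $\left(\begin{smallmatrix}-\sigma & p\\ p & -1\end{smallmatrix}\right)$ with $p=p(z)=\tfrac{\sqrt A}{2}\big(\tfrac{\sigma}{A}+r\big)-\sqrt A\,z$. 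Hence the eigenvalues $\lambda_i(u,S)$ are $-b$ and $\mu_\pm(z)=\tfrac12\big(-(\sigma+1)\pm\sqrt{(\sigma-1)^2+4p(z)^2}\big)$, depending on $u$ only through $z$.

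Next I would pin down the ordering using the hypotheses. Since $-\mu_-(z)=\tfrac12\big((\sigma+1)+\sqrt{(\sigma-1)^2+4p(z)^2}\big)\ge\max(\sigma,1)$, the assumptions $b<1$ or $\{b\ge1,\ \sigma>b\}$ give $-\mu_-(z)>b$, i.e.\ $\mu_-(z)$ is the smallest eigenvalue for every $z$. Thus $\lambda_3=\mu_-$ and $\{\lambda_1,\lambda_2\}=\{\mu_+,-b\}$, so for $d=2+s$,
\[
  \lambda_1+\lambda_2+s\lambda_3=\mu_+(z)+s\,\mu_-(z)-b=-\tfrac{(\sigma+1)(1+s)}{2}-b+\tfrac{1-s}{2}\sqrt{(\sigma-1)^2+4p(z)^2}.
\]
The only surviving state-dependence is the square root, whose value at the equilibrium $S_0$ ($z=0$) is $\Delta:=\sqrt{(\sigma-1)^2+A(\tfrac{\sigma}{A}+r)^2}$.

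Now I would introduce $V$ to drive the supremum to its value at $S_0$. Since $p(z)$ is affine, the AM--GM bound $\sqrt{(\sigma-1)^2+4p^2}\le\frac{(\sigma-1)^2+4p^2+\Delta^2}{2\Delta}$, tight exactly at $z=0$, replaces the root by a quadratic in $z$ whose excess over the $z=0$ value is $\tfrac{1-s}{\Delta}(p^2-p_0^2)$. I would then choose $V$ as an explicit polynomial whose derivative $\dot V=(\mathrm{grad}\,V)^{*}f$ along \eqref{sys:lorenz-general} cancels this excess; the spurious $xy$-type terms generated by $\dot z=-bz+xy$ are dominated by auxiliary quadratic terms in $V$ through Young's inequality. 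This yields
\[
  \sup_{u\in\mathbb{R}^3}\big(\lambda_1+\lambda_2+s\lambda_3+\dot V\big)=-\tfrac{(\sigma+1)(1+s)}{2}-b+\tfrac{1-s}{2}\Delta,
\]
the supremum being attained at $S_0$. For the regime \eqref{cond:param2} I would set the right-hand side to zero, solve $s=\frac{\Delta-(\sigma+1)-2b}{\Delta+\sigma+1}$, so that $2+s=3-\frac{2(\sigma+b+1)}{\sigma+1+\Delta}$, and conclude \eqref{ineq:lyap-dim} from \eqref{ineq:weilSVct} (using strict inequality for $s'$ slightly above $s$ and passing to the limit). For the regime \eqref{cond:param1} the analogous estimate for the two largest eigenvalues, $\sup_u(\mu_+-b+\dot V)<0$, verifies \eqref{pmm36}, so Theorem~\ref{theorem:th2} forces every solution into the stationary set, giving $\dim_{\rm H}K=0$.

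The hard part is the explicit construction of $V$: making $\dot V$ cancel the quadratic-in-$z$ excess while simultaneously controlling the cross terms $xy$ (and the cubic terms they spawn) produced by differentiating along the flow, so that the supremum over all of $\mathbb{R}^3$ stays finite and is realized at $S_0$. Checking that the optimal AM--GM parameter is precisely $\Delta$—equivalently, that $S_0$ is the critical point of $z\mapsto\lambda_1+\lambda_2+s\lambda_3+\dot V$—is exactly what produces the sharp constant in \eqref{ineq:lyap-dim}. Finally, by Proposition~\ref{ueqcr} this value equals $\dim_{\rm L}(\{\varphi^t_{\rm GD}\}_{t\ge0},S_0)$, so Corollary~\ref{cor:equil} upgrades \eqref{ineq:lyap-dim} to an exact formula on any compact invariant set containing $S_0$.
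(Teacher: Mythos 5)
You set up the proof exactly as the paper does: your $S=\mathrm{diag}(1,\sqrt{A},\sqrt{A})$ is a scalar multiple of the paper's $\mathrm{diag}(-1/\sqrt{A},1,1)$, so it yields the same block-diagonal symmetrized matrix, the same eigenvalues $-b$ and $\mu_{\pm}(z)$ (your sign of $p(z)$ is in fact the one consistent with the paper's later expansion $w=Az^2-(\sigma+Ar)z$), a cleaner version of the paper's ordering argument for $\lambda_2>\lambda_3$, and the identical linearization bound on the radical. The genuine gap is the step you yourself label ``the hard part'': you assert, but do not construct, a function $V$ whose derivative along the flow absorbs the excess $\tfrac{1-s}{\Delta}\bigl(Az^2-(\sigma+Ar)z\bigr)$, and that construction is the bulk of the paper's proof, not a routine estimate. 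Concretely, with $\vartheta=\gamma_1x^2+\gamma_2y^2+\gamma_3z^2+\gamma_4z$ the monomials $xyz$ and $z$ can only be cancelled by fixing $\gamma_3=A\gamma_1+\gamma_2$ and $\gamma_4=-(\sigma+Ar)/b$; after that there is no freedom left in the $xy$-coefficient, and one must prove that the residual quadratic form $A_1x^2+A_2xy+A_3y^2+B_1z^2$ of \eqref{def:coef} is negative semidefinite for some admissible $(\gamma_1,\gamma_2)$. The resulting constraints \eqref{ineq1}--\eqref{ineq3} pull in opposite directions: $B_1\le0$ forces $\gamma_1\ge\tfrac{1}{2b}-\tfrac{\gamma_2}{A}$, while the discriminant condition confines $\gamma_1$ to the band $[\Gamma_-(\gamma_2),\Gamma_+(\gamma_2)]$, which is nonempty only for $0<\gamma_2\le\tfrac{\sigma+Ar}{2b(r-1)}$ when $r>1$. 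The paper's proof is essentially the verification that these two regions intersect (the endpoint evaluations of $\Gamma_{\pm}$ and $\Gamma$ establishing solvability of \eqref{cond:gamma}); a generic ``Young's inequality'' domination cannot substitute for this, because the coefficients you would use to dominate the cross terms are already pinned by the exact-cancellation requirements.

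Your closing claim is also incorrect as stated: Proposition~\ref{ueqcr} computes $\dim_{\rm L}(\{\varphi^t_{\rm GD}\}_{t\ge0},S_0)$ from the eigenvalues of the actual Jacobian $J(S_0)$, which involve $\sqrt{(\sigma-1)^2+4\sigma r}$, not from the eigenvalues of the symmetrized matrix at $S_0$, which involve $\Delta=\sqrt{(\sigma-1)^2+A\left(\tfrac{\sigma}{A}+r\right)^2}$. Since $\Delta^2-\bigl((\sigma-1)^2+4\sigma r\bigr)=(\sigma-Ar)^2/A\ge 0$, the bound \eqref{ineq:lyap-dim} strictly exceeds $\dim_{\rm L}(\{\varphi^t_{\rm GD}\}_{t\ge0},S_0)$ unless $\sigma=Ar$, so Corollary~\ref{cor:equil} does not upgrade Theorem~\ref{theorem:th3} to an exact formula under its hypotheses alone; this is precisely why Corollary~\ref{conseq} requires the extra assumptions $\sigma=Ar$, or $b=1$ together with the refined estimate of Theorem~\ref{theorem:th4}.
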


\begin{proof}
The Jacobian matrix for system~\eqref{sys:lorenz-general} is as follows
\begin{equation}\label{eq:J}
	J(u) = \left(
		\begin{array}{ccc}
			-\sigma & \sigma - Az & -Ay \\
			r - z & -1 & -x \\
			y & x & -b
		\end{array}
	\right).
\end{equation}
Consider a matrix
\begin{equation*}
	S = \left(
		\begin{array}{ccc}
			-\frac{1}{\sqrt{A}} & 0 & 0 \\
			0 & 1 & 0 \\
			0 & 0 & 1
		\end{array}
	\right).
\end{equation*}

Then
\begin{equation}\label{matrix:sjs}
  \frac{1}{2} \left( S J(u) S^{-1} + (S J(u) S^{-1})^{\ast}\right) =
\left(
    \begin{array}{ccc}
      -\sigma & -\sqrt{A}z - \frac{\sigma + Ar}{2\sqrt{A}} & 0 \\
      -\sqrt{A}z - \frac{\sigma + Ar}{2\sqrt{A}} & -1 & 0 \\
      0 & 0 & -b
    \end{array}
  \right),
\end{equation}
and its characteristic polynomial
has the form
\begin{equation*}
	(\lambda + b)\left[ \lambda^2 + (\sigma + 1)\lambda + \sigma -
\left( \sqrt{A}z + \frac{\sigma + Ar}{2\sqrt{A}} \right)^2 \right].
\end{equation*}
Denote by $\lambda_i = \lambda_i(u, S)$, $i = 1,2,3$,
the eigenvalues of matrix~\eqref{matrix:sjs}.
Then
\begin{eqnarray*}
\lambda_2 & = & -b, \\
\lambda_{1,3} & = & -\frac{(\sigma + 1)}{2} \pm
\frac{1}{2}\sqrt{(\sigma - 1)^2
+ 4 \left( \sqrt{A}z + \frac{\sigma + Ar}{2\sqrt{A}} \right)^2}.
\end{eqnarray*}
Thus, $\lambda_1 > \lambda_3$ and $\lambda_3 < 0$.
Let us find the conditions under which the
inequality $\lambda_2 > \lambda_3$ holds, i.e.
\begin{equation}\label{eq:lam2lam3}
\lambda_2 - \lambda_3
= -b + \frac{1}{2}(\sigma + 1) + \frac{1}{2}\sqrt{(\sigma - 1)^2
+ 4\left( \sqrt{A}z + \frac{\sigma + Ar}{2\sqrt{A}} \right)^2} > 0.
\end{equation}
If $(\sigma + 1) > 2b$, then inequality \eqref{eq:lam2lam3} is valid.
If $(\sigma + 1) \leqslant 2b$, then inequality \eqref{eq:lam2lam3} is
equivalent to the following relation
\begin{eqnarray*}
(\sigma + 1)^2 + 4b^2 - 4b(\sigma + 1) &<& (\sigma - 1)^2  +
4\left(\sqrt{A}z + \frac{\sigma + Ar}{2\sqrt{A}} \right)^2 \\
\Leftrightarrow \sigma - b\sigma - b + b^2 &<&
\left(\sqrt{A}z + \frac{\sigma + Ar}{2\sqrt{A}} \right)^2
\Leftrightarrow (\sigma -b)(1-b) <
\left(\sqrt{A}z + \frac{\sigma + Ar}{2\sqrt{A}} \right)^2.
\end{eqnarray*}
The latter is true in the case when $(\sigma - b)(1 - b) < 0$.
Hence, if the inequalities
\begin{equation}
(\sigma + 1) > 2 b \quad \text{or} \quad
\left\{
\begin{array}{l}
(\sigma + 1) \leqslant 2b, \\
(\sigma - b)(1 - b) < 0
\end{array}
\right.
\label{cond:lambda_sorted1}
\end{equation}
hold, then $\lambda_2 > \lambda_3$.
Inequalities~\eqref{cond:lambda_sorted1} are equivalent 
to the following expressions
\begin{equation}
b < 1  \quad \text{or} \quad
\left\{
\begin{array}{l}
b \geqslant 1, \\
\sigma > b
\end{array}
\right.
\label{cond:lambda_sorted2}
\end{equation}
and the conditions of Theorem~\ref{theorem:th3} are fulfilled.
This implies that under these conditions $\lambda_3$ is the
smallest eigenvalue.

Consider $s \in [0,\,1)$ and the following relations
\begin{eqnarray*}
2 (\lambda_1 + \lambda_2 + s \lambda_3) & = & -(\sigma + 1 + 2b) -
s(\sigma + 1) +
(1-s)\sqrt{(\sigma - 1)^2  + 4\left(\sqrt{A}z + \frac{\sigma + Ar}{2\sqrt{A}}\right)^2} \\
& \leqslant & -(\sigma + 1 + 2b) - s(\sigma + 1) + (1-s)\left[(\sigma
- 1)^2 + A\left(\frac{\sigma}{A}+r\right)^2\right]^\frac{1}{2} +\\
&& + \frac{2(1-s)}{\left[(\sigma - 1)^2 +
A\left(\frac{\sigma}{A}+r\right)^2\right]^\frac{1}{2}} \left[ -
(\sigma + Ar)z + Az^2\right].
\end{eqnarray*}

Denote
\begin{equation}
	w(x,y,z) = -(\sigma + Ar)z + Az^2.
\end{equation}

Choose the Lyapunov-like function $V(x,y,z)$ as follows
\begin{equation*}
	V(x,y,z) = \frac{2(1-s)}{\left[(\sigma - 1)^2
			+ A\left(\frac{\sigma}{A}+r\right)^2 \right]^{\frac{1}{2}}} \, \vartheta(x,y,z),
\end{equation*}
where	$\vartheta(x,y,z) = \gamma_1 x^2 + \gamma_2 y^2 + \gamma_3 z^2 + \gamma_4 z$
and $\gamma_1$, $\gamma_2$, $\gamma_3$, $\gamma_4$ are varying parameters.

Differentiation of $\vartheta$ along solutions of system~\eqref{sys:lorenz-general}
yields
\begin{align}
& \dot{\vartheta}(x,y,z) = 2 \gamma_1 x (\sigma (y-x) - Ayz) +
2 \gamma_2 y(r x - y - x z)  + (2 \gamma_3 z + \gamma_4) (-b z + x y) & \nonumber\\
& = 2(\gamma_3 - \gamma_2 - \gamma_1 A ) xyz - 2\gamma_1 \sigma x^2
+ (2\gamma_1 \sigma + \gamma_4 + 2\gamma_2 r) xy
- 2\gamma_2 y^2 - 2\gamma_3 b z^2 - \gamma_4 b z. & \label{eq:deriv-sys}
\end{align}
Thus
\begin{eqnarray*}
\dot{\vartheta}(x,y,z) + w(x,y,z) & = &
2(\gamma_3 - \gamma_2 - \gamma_1 A ) xyz +
(2\gamma_1 \sigma + \gamma_4 + 2\gamma_2 r) xy - 2\gamma_2 y^2 -\\
&& - 2\gamma_1 \sigma x^2 + (A - 2\gamma_3 b ) z^2 - (\sigma + Ar + \gamma_4 b) z.
\end{eqnarray*}
Parameters $\gamma_1$, $\gamma_2$, $\gamma_3$, $\gamma_4$ are
chosen such that $\dot{\vartheta}(x,y,z) + w(x,y,z)$
takes the form of polynomial
\begin{equation}
	A_1 x^2 + A_2 x y +A_3 y^2 + B_1 z^2,
\label{polinom}
\end{equation}
i.e. the coefficients of monomials $x y z$ and $z$
in~\eqref{eq:deriv-sys} are zero and
\begin{equation}\label{sys:equiv1}
\gamma_3 = A \gamma_1 + \gamma_2, \quad
\gamma_4 = - \frac{\sigma + Ar}{b}.
\end{equation}
From~\eqref{sys:equiv1} we have
\begin{eqnarray*}
\dot{\vartheta}(x,y,z)  + w(x,y,z) & = & - 2\gamma_1 \sigma x^2 + (2\gamma_1 \sigma -
\frac{\sigma + Ar}{b} + 2\gamma_2 r) xy - 2\gamma_2 y^2 +
(-2 b (A \gamma_1 + \gamma_2) + A) z^2 \\
& = & A_1 x^2 + A_2 xy + A_3 y^2 + B_1 z^2,
\end{eqnarray*}
where
\begin{equation}
A_1 = -2 \gamma_1 \sigma, \,\,\,
A_2 = 2 \gamma_1 \sigma -  \frac{\sigma + Ar}{b} + 2\gamma_2 r, \,\,\,
A_3 = - 2\gamma_2, \,\,\,
B_1 = -2 b (A \gamma_1  + \gamma_2) + A.
\label{def:coef}
\end{equation}
Polynomial~\eqref{polinom} can be written as
$$
	A_3 \left(y + \frac{A_2}{2 A_3} x \right)^2 +
	\frac{4 A_1 A_3 - A_2^2}{4 A_3} x^2  + B_1 z^2 .
$$
Hence the inequality $\dot{\vartheta}(x,y,z)  + w(x,y,z) \leqslant 0$ holds if and only if
\begin{equation}
		B_1 \leqslant 0, \quad
		A_3 < 0, \quad
		A_1 A_3 - \left(\frac{A_2^2}{2}\right)^2 \geqslant 0 . \label{sys:ineq1}
\end{equation}

Combining~\eqref{def:coef} with~\eqref{sys:ineq1}, we obtain
\begin{eqnarray}
&& \gamma_2 > 0, \label{ineq1}\\
&& \gamma_1 \geqslant \frac{1}{2b} - \frac{1}{A}\gamma_2,
\label{ineq2}\\
&& 4\sigma \gamma_1 \gamma_2 - \left(\gamma_1 \sigma -
\frac{\sigma + Ar}{2b} + \gamma_2 r \right)^2 \geqslant 0.
\label{ineq3}
\end{eqnarray}
Inequality \eqref{ineq3} is solvable with respect to $\gamma_1$ if and only if
its discriminant is nonnegative:
\[
  D = 8 \sigma^2 \gamma_2 \left( 2\gamma_2 (1-r) + \frac{\sigma +
  Ar}{b}\right) \geqslant 0.
\]
By \eqref{ineq1} and since $r > 1$,
the latter is equivalent to the following relation
\begin{equation}
2\gamma_2 (r-1) \leqslant \frac{\sigma +
Ar}{b}
\quad \Leftrightarrow \quad \gamma_2
\leqslant \frac{\sigma + Ar}{2 (r-1) b}.
\label{cond:solvable}
\end{equation}

Hence if condition \eqref{cond:solvable} holds, then \eqref{ineq3} is
equivalent to the
relation  $\Gamma_{-}(\gamma_2) \leqslant \gamma_1 \leqslant \Gamma_{+}(\gamma_2)$, where
\begin{equation}
	\Gamma_{\pm}(\gamma_2) = \frac{1}{\sigma}\left[\gamma_2(2-r)+\frac{\sigma
+ Ar}{2b} \pm \sqrt{2\gamma_2 \left(2(1-r)\gamma_2 +
\frac{\sigma + Ar}{b}\right)}\right]
\end{equation}
are the roots of quadratic polynomial in the left-hand side of
\eqref{ineq3}.

Consider now the location of the roots $\Gamma_{\pm}(\gamma_2)$ on the
real axis.
If $1 < r \leqslant 2$, then by \eqref{ineq1} we have $\gamma_2(2-r)+\frac{\sigma + Ar}{2b}
\geqslant 0$.
If $r > 2$, then by \eqref{cond:solvable} the relation
\begin{equation*}
\gamma_2(2-r)+\frac{\sigma + Ar}{2b} \geqslant 0 \quad \Leftrightarrow
\quad \gamma_2 \leqslant \frac{\sigma + Ar}{2(r-2)b}
\end{equation*}
holds since
\begin{equation*}
\frac{\sigma + Ar}{2(r-1)b} < \frac{\sigma + Ar}{2(r-2)b}.
\end{equation*}
Thus, for $\gamma_2$, defined by \eqref{ineq1} and \eqref{cond:solvable},
and $r > 1$ we have $\gamma_2(2-r)+\frac{\sigma + Ar}{2b} \geqslant 0$ and $\Gamma_{+}(\gamma_2) \geqslant 0$.

Let us prove that for $\gamma_2$, defined by \eqref{ineq1} and \eqref{cond:solvable},
we have $\Gamma_{-}(\gamma_2) \geqslant 0$.
It is true since
\begin{eqnarray*}
\gamma_2(2-r)+\frac{\sigma + Ar}{2b} &\geqslant& \sqrt{2\gamma_2
\left(2(1-r)\gamma_2 +
\frac{\sigma + Ar}{b}\right)} \\
\Leftrightarrow \left(\gamma_2(2-r)+\frac{\sigma +
Ar}{2b}\right)^2 &\geqslant& 2\gamma_2 \left(2(1-r)\gamma_2 +
\frac{\sigma + Ar}{b}\right) \\
\Leftrightarrow r^2\gamma_2^2 - \frac{\sigma + Ar}{b}r\gamma_2 +
\left(\frac{\sigma + Ar}{2b}\right)^2 &=&
\left(r\gamma_2 - \frac{\sigma + Ar}{2b}\right)^2 \geqslant 0.
\end{eqnarray*}
Hence if $r > 1$, then $0 \leqslant \Gamma_{-}(\gamma_2) \leqslant \gamma_1
\leqslant \Gamma_{+}(\gamma_2)$.

Let $\Gamma (\gamma_2) = \frac{1}{2b} - \frac{1}{A}\gamma_2$.
Thus if $r > 1$, the conditions \eqref{cond:lambda_sorted2} holds,
and there exist nonnegative $\gamma_1,\gamma_2$
such that a system of inequalities
\begin{equation}
\begin{cases}
0 < \gamma_2 \leqslant \frac{\sigma + Ar}{2 (r-1)b}, \\
\max \left\{ \Gamma(\gamma_2), \Gamma_{-}(\gamma_2) \right\} \leqslant
\gamma_1 \leqslant \Gamma_{+}(\gamma_2)
\end{cases} \label{cond:gamma}
\end{equation}
is solvable, then the inequality $\dot{\vartheta} + w \leqslant 0$ is valid.


Let us show that system \eqref{cond:gamma} is solvable.
Note that
\begin{eqnarray*}
&& \Gamma_{+} (0) = \Gamma_{-} (0) = \frac{\sigma + Ar}{2 b} > 0,\\
&& \Gamma_{+} \left(\frac{\sigma + Ar}{2 b(r-1)}\right) =
\Gamma_{-} \left(\frac{\sigma + Ar}{2 b(r-1)}\right) =
\frac{\sigma + Ar}{2b(r-1)\sigma} > 0 ,\\
&& \Gamma (0) = \frac{1}{2b} > 0 , \quad
\Gamma \left(\frac{\sigma + Ar}{2 b(r-1)}\right) = - \frac{\sigma
+A}{2b(r-1)A} < 0.
\end{eqnarray*}

This implies that the upper half plane, defined by the inequality
$\gamma_1 \geqslant \Gamma (\gamma_2)$, always
intersects the domain bounded by the curves $\Gamma_{\pm}(\gamma_2)$.
This intersection corresponds to
the existence domain of solutions of system \eqref{cond:gamma}.

Thus, for the chosen matrix $S$ and Lyapunov-like function
$V(x,y,z)$ if \eqref{cond:param2} is valid and
\begin{equation}
	s > \frac{-(\sigma + 1 + 2 b) + \sqrt{(\sigma - 1)^2 +
A\left(\frac{\sigma}{A}+r\right)^2}}
	{\sigma + 1 + \sqrt{(\sigma - 1)^2 +
A\left(\frac{\sigma}{A}+r\right)^2}}, \label{ineq:s}
\end{equation}
then for system \eqref{sys:lorenz-general} the conditions of Theorem \ref{theorem:th1} hold.
If \eqref{cond:param1} is valid and $s = 0$, then for system \eqref{sys:lorenz-general}
the conditions of Theorem \ref{theorem:th2} hold.

Hence if~\eqref{cond:param2} holds, then
$\dim_{\rm L} (\{\varphi^t_{\rm GD}\}_{t \geq 0}, K) \leqslant 2 + s$ for all $s$,
satisfying~\eqref{ineq:s}.
This implies inequality~\eqref{ineq:lyap-dim}
and completes the proof of the theorem.
\end{proof}

For system \eqref{sys:lorenz-general} with physically sounded value of parameter $b = 1$,
the upper estimate \eqref{ineq:lyap-dim} can be improved  \cite{LeonovM-2016-MAIK}.
\begin{theorem}\label{theorem:th4}
Let $b = 1$ and $\sigma \geq A r$.
If
\begin{equation}\label{th4:cond-main}
  \frac{2(\sigma + 2)}{\sigma + 1 + \sqrt{(\sigma - 1)^2 + 4 \sigma r}} \leq 1,
\end{equation}
then
\begin{equation}\label{estim}
  \dim_{\rm L} (\{\varphi^t_{\rm GD}\}_{t \geq 0}, K)
  \leq 3 - \frac{2(\sigma + 2)}{\sigma + 1 + \sqrt{(\sigma - 1)^2 + 4 \sigma r}}.
\end{equation}
\end{theorem}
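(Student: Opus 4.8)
The plan is to obtain the sharper bound by running the scheme of Theorem~\ref{theorem:th1} with a matrix $S$ that is \emph{re-tuned to the equilibrium} $S_0=(0,0,0)$, rather than the matrix of Theorem~\ref{theorem:th3}, and to exploit that for $b=1$ the invariant set lies in the localization set $\Omega=\{y^2+z^2\le 2rz\}$ from \eqref{localize_set}. First I would explain why the right-hand side of \eqref{estim} is the natural target: at $S_0$ the $A$-dependent entries of the Jacobian \eqref{eq:J} vanish, so $J(S_0)$ coincides with the classical Lorenz Jacobian and has the simple real eigenvalues $-b=-1$ and $\tfrac12\big(-(\sigma+1)\pm\sqrt{(\sigma-1)^2+4\sigma r}\big)$. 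Under \eqref{th4:cond-main} one has $\sqrt{(\sigma-1)^2+4\sigma r}\ge\sigma+3$, so the largest eigenvalue is $\ge 1$, the partial-sum index is $j=2$, and Proposition~\ref{ueqcr} yields $\dim_{\rm L}(\{\varphi^t_{\rm GD}\}_{t\ge0},S_0)=3-\tfrac{2(\sigma+2)}{\sigma+1+\sqrt{(\sigma-1)^2+4\sigma r}}$, which is exactly \eqref{estim}. This both fixes the target and, via Corollary~\ref{cor:equil}, shows the bound will be exact for any $K\ni S_0$.

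For the upper bound I would replace the matrix $S=\mathrm{diag}(-1/\sqrt A,1,1)$ of Theorem~\ref{theorem:th3} by $S=\mathrm{diag}(\sqrt{r/\sigma},1,1)$ with $s_1=\sqrt{r/\sigma}$. A direct computation shows that the symmetrization of $SJ(u)S^{-1}$ (for $b=1$) has diagonal $(-\sigma,-1,-1)$, off-diagonal $(1,2)$-entry $\sqrt{\sigma r}-\tfrac12(As_1+s_1^{-1})z$, off-diagonal $(1,3)$-entry $\tfrac{\sigma-Ar}{2\sqrt{\sigma r}}\,y$, and vanishing $(2,3)$-entry. The point of this choice is that at $S_0$ the off-diagonal $(1,2)$-entry equals $\sqrt{\sigma r}$ (by AM--GM this is the minimum over $s_1>0$ of $\tfrac12(s_1\sigma+r/s_1)$), so the $(x,y)$-block reproduces precisely the sharp radical $\sqrt{(\sigma-1)^2+4\sigma r}$ rather than the weaker $\sqrt{(\sigma-1)^2+A(\sigma/A+r)^2}$ of Theorem~\ref{theorem:th3}; indeed $A(\sigma/A+r)^2=\tfrac{(\sigma+Ar)^2}{A}=4\sigma r+\tfrac{(\sigma-Ar)^2}{A}\ge 4\sigma r$, so the old anchoring loses exactly $\tfrac{(\sigma-Ar)^2}{A}$. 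The new feature is the $(1,3)$ coupling proportional to $(\sigma-Ar)y$, which vanishes iff $\sigma=Ar$ and whose magnitude is governed by the hypothesis $\sigma\ge Ar$. Using the trace identity $\lambda_1+\lambda_2+s\lambda_3=(1-s)(\lambda_1+\lambda_2)-s(\sigma+2)$, the verification of \eqref{ineq:weilSVct} reduces to estimating the sum $\lambda_1+\lambda_2$ of the two largest eigenvalues of this $3\times3$ symmetric matrix.

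I would then pick a Lyapunov-type function $V=\kappa\,\vartheta$ with $\vartheta=\gamma_1x^2+\gamma_2y^2+\gamma_3z^2+\gamma_4z$ as in the proof of Theorem~\ref{theorem:th3}, augmented by a nonnegative multiple $\rho\,(2rz-y^2-z^2)$ of the localization defect (admissible since $K\subset\Omega$), and choose $\gamma_1,\dots,\gamma_4,\rho,\kappa$ so that $\dot V$ compensates both the $z$-growth of the $(1,2)$-entry and the $(\sigma-Ar)y$-coupling of the $(1,3)$-entry, making $\tfrac12(\lambda_1+\lambda_2+s\lambda_3)+\dot V$ a quadratic form that is negative semidefinite on $\Omega$ with its maximum attained at $S_0$. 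Solving the resulting coefficient inequalities — the analogue of \eqref{ineq1}--\eqref{ineq3} and \eqref{cond:gamma} with the extra multiplier $\rho$ — via the corresponding $\Gamma_{\pm}$-type solvability analysis, I would show the admissible region is nonempty exactly under $\sigma\ge Ar$, the same condition that keeps the parameters physical since \eqref{sys:gen-lorenz:change_var} requires $R=r(\sigma-Ar)>0$. This produces the threshold $s^\ast=\tfrac{-(\sigma+3)+\sqrt{(\sigma-1)^2+4\sigma r}}{\sigma+1+\sqrt{(\sigma-1)^2+4\sigma r}}$, whence Theorem~\ref{theorem:th1} gives $\dim_{\rm L}(\{\varphi^t_{\rm GD}\}_{t\ge0},K)\le 2+s$ for every $s>s^\ast$; condition \eqref{th4:cond-main} guarantees $s^\ast\in[0,1)$, so $2+s^\ast$ equals the right-hand side of \eqref{estim}.

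The main obstacle is the step of controlling the smallest eigenvalue of the \emph{full} $3\times3$ symmetric matrix in the presence of the $(1,3)$-coupling $\propto(\sigma-Ar)y$: for $\sigma>Ar$ this perturbation is genuinely nonzero, so the matrix is not block-triangular and cannot be diagonalized globally, and a naive pointwise radical estimate would leave an unabsorbable positive constant $\tfrac{(\sigma-Ar)^2}{A}$ at the origin (note $S_0\in\partial\Omega$, where the defect $2rz-y^2-z^2$ vanishes). The heart of the argument is therefore to balance $\dot V$ and the localization multiplier $\rho$ so that the off-diagonal perturbation is dominated throughout $\Omega$ while the extremum is forced onto the equilibrium $S_0$ — equivalently, proving the critical path is $S_0$ — and to verify that the resulting multi-parameter coefficient system is solvable precisely when $\sigma\ge Ar$.
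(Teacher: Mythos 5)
Your proposal is, in its essential elements, the paper's own proof: the re-tuned matrix $S=\mathrm{diag}(\sqrt{r/\sigma},1,1)$, the symmetrized matrix you compute (your entries agree exactly with the paper's), the reduction of \eqref{ineq:weilSVct} through the trace identity (the paper states it as positive definiteness of the symmetrized matrix plus $\mu I$ with $\mu=\tfrac{\sigma+2}{1-s}$, which is the same statement), the use of the localization set $\Omega$ from \eqref{localize_set} together with the hypothesis $\sigma\ge Ar$ to control the $(1,3)$-coupling, and the same threshold $s^\ast$ and target value via Proposition~\ref{ueqcr}. The one genuine divergence is the final verification, which is exactly the step you leave as an assertion: you propose a multi-parameter ansatz ($\gamma_1,\dots,\gamma_4,\kappa,\rho$) with a $\Gamma_{\pm}$-type solvability analysis modeled on Theorem~\ref{theorem:th3}, and declare, without carrying it out, that the admissible region is nonempty precisely when $\sigma\ge Ar$. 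The paper shows this machinery is unnecessary: it takes $V\equiv 0$ and applies Sylvester's criterion to the shifted matrix, which yields the exact algebraic condition $(\mu-\sigma)(\mu-1)>a^2+c^2$, where $a=\sqrt{\sigma r}-\tfrac{z}{2}\sqrt{\sigma/r}\,\bigl(1+\tfrac{Ar}{\sigma}\bigr)$ and $c=\tfrac{y}{2}\sqrt{\sigma/r}\,\bigl(1-\tfrac{Ar}{\sigma}\bigr)$ are your off-diagonal entries; this is an honest quadratic in $(y,z)$ — no radical ever appears, so the linearization of the square root that your ``quadratic form'' step would still require is avoided entirely — and on $\Omega$ one has $z\ge0$, $y^2+z^2\le 2rz$, and $\bigl(1-\tfrac{Ar}{\sigma}\bigr)^2\le\bigl(1+\tfrac{Ar}{\sigma}\bigr)^2$, which give $a^2+c^2\le \sigma r$ there, so the condition collapses to $r\le(\mu-\sigma)(\mu-1)/\sigma$, i.e. to \eqref{th4:cond-main} and the bound \eqref{estim}. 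Hence your deferred solvability claim is true, but trivially so (take $\kappa=0$ and absorb the defect pointwise), and what you call the heart of the argument is a three-line computation rather than a delicate balancing problem. One small correction: with your own $S$ there is no ``unabsorbable constant $(\sigma-Ar)^2/A$ at the origin'' — that loss is what the matrix of Theorem~\ref{theorem:th3} incurs; here the coupling $c$ is proportional to $y$ and vanishes at $S_0$, and the pointwise bound $y^2\le 2rz-z^2$ degrades to zero exactly on $\partial\Omega\ni S_0$, which is precisely why the localization closes the argument.
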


\begin{proof}

Here we use the following idea suggested by Leonov \cite{Leonov-2016-ArXiv}.
The relation
\begin{equation}\label{cond:matrixQ}
  \frac{1}{2} \left( S J(u) S^{-1} + (S J(u) S^{-1})^{\ast}\right) + \mu I > 0,
\end{equation}
where $\mu = \frac{-\Tr J(u)}{1-s} = \frac{\sigma+b+1}{1-s} = \frac{\sigma+2}{1-s}$ and $s \in [0,1)$
is equivalent to condition \eqref{ineq:weilSVct} of Theorem \ref{theorem:th1} with $V \equiv 0$.
Note that $\Tr J(u) = \lambda_1(u,S) + \lambda_2(u,S) +\lambda_3(u,S)$.
The positive definiteness of matrix \eqref{cond:matrixQ} means
that $\lambda_3(u,S) + \mu > 0 \Leftrightarrow \lambda_1(u,S) + \lambda_2(u,S) +s \lambda_3(u,S) < 0$.

Consider a matrix
\begin{equation*}
 S = \left(
 \begin{array}{ccc}
 \sqrt{\frac{r}{\sigma}} & 0 & 0 \\
 0 & 1 & 0 \\
 0 & 0 & 1
 \end{array}
 \right).
\end{equation*}

Condition \eqref{cond:matrixQ} means that all leading principal minors $\Delta_{1,2,3}$
of the corresponding matrix are positive.
For the chosen matrix $S$ we have $\Delta_1 = -\sigma + \mu > 0$, $\Delta_3 = (-1 + \mu) \, \Delta_2$ and
relation \eqref{cond:matrixQ} can be expressed in the following way
\begin{equation}\label{cond:detQ}
  \Delta_3 = \left|
    \begin{array}{ccc}
    -\sigma + \mu & \sqrt{r \sigma} - \frac{z}{2}\sqrt{\frac{\sigma}{r}}\left(1 + \frac{Ar}{\sigma}\right) \, &
    \,\frac{y}{2}\sqrt{\frac{\sigma}{r}}\left(1 - \frac{Ar}{\sigma}\right) \\
    \sqrt{r \sigma} - \frac{z}{2}\sqrt{\frac{\sigma}{r}}\left(1 + \frac{Ar}{\sigma}\right) & -1 + \mu& 0 \\
    \frac{y}{2}\sqrt{\frac{\sigma}{r}}\left(1 - \frac{Ar}{\sigma}\right) & 0 & -1+ \mu
    \end{array}
  \right| \, > \, 0.
\end{equation}
Condition \eqref{cond:detQ} can be rewritten as
\[
  \frac{r}{\sigma}\left(\left(\mu -\sigma\right) \left(\mu - d\right) - r \sigma \right)
  + \left(1 + {\frac{Ar}{\sigma}}\right) \left( -\frac{{z}^{2}}{4} \left( 1 + {\frac{Ar}{\sigma}} \right)
  - \frac{y^2}{4} \, \frac{\left(1 - \frac{Ar}{\sigma} \right)^2}{\left(1 + \frac{Ar}{\sigma}\right)}
  + r z\right) > 0.
\]
One can see that for $b = 1$, $\sigma \geq Ar$ and $V \equiv 0$ condition \eqref{ineq:weilSVct}
holds for all $(x, y, z)$ from $\Omega(x, y, z)$ (see \eqref{localize_set}) if
\begin{equation}
  r \leq \frac{(\mu - \sigma)(\mu - 1)}{\sigma}.
\end{equation}
The expression $r = \frac{(\mu - \sigma)(\mu - 1)}{\sigma}$ is equivalent to the relation
\begin{equation*}
  \frac{2(\sigma + 2)}{\sigma + 1 + \sqrt{(\sigma - 1)^2 + 4 \sigma r}} = 1 - s.
\end{equation*}
Thus, if \eqref{th4:cond-main} is valid, then all the conditions of Theorem \ref{theorem:th1}
for system \eqref{sys:lorenz-general} hold.
\end{proof}

The obtained result is a development
of results from \cite{LeonovKM-2015-EPJST,LeonovM-2016-MAIK}
for all values of parameters for which
the transformation of system \eqref{sys:lorenz-general}
to \eqref{sys:conv_fluid} is valid (see conditions \eqref{sys:gen-lorenz:change_var}).

Theorems~\ref{theorem:th3} and \ref{theorem:th4} imply the following
\begin{conseq}\label{conseq}
If
\begin{enumerate}[label=(\roman*)]
  \item $\sigma = Ar$, $b < 1$ \label{cor:2:cond1} or
  \item $\sigma = Ar$, $b \geq 1$, $\sigma > b$ \label{cor:2:cond1_2} or
  \item $\sigma \geq Ar$, $b = 1$ \label{cor:2:cond2}
\end{enumerate}
and
\[
  \frac{2(\sigma + b + 1)}{\sigma + 1 + \sqrt{(\sigma - 1)^2 + 4 \sigma r}} \leq 1,
\]
then the Lyapunov dimension of the zero equilibrium of $\{\varphi^t_{\rm GD}\}_{t \geq 0}$
coincides with \eqref{estim}
and for any compact invariant set
$K \supset S_0 = (0,0,0)$ we get the exact Lyapunov dimension formula
\begin{equation}\label{finalLDform}
  \dim_{\rm L} (\{\varphi^t_{\rm GD}\}_{t \geq 0}, K) =
  \dim_{\rm L} (\{\varphi^t_{\rm GD}\}_{t \geq 0}, S_0) =
  3 - \frac{2(\sigma + b + 1)}{\sigma + 1 + \sqrt{(\sigma - 1)^2 + 4 \sigma r}}.
\end{equation}
\end{conseq}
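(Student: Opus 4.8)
The plan is to combine the upper bounds of Theorems~\ref{theorem:th3} and \ref{theorem:th4} with an \emph{exact} computation of the Lyapunov dimension at the zero equilibrium via Proposition~\ref{ueqcr}, and then close the gap with Corollary~\ref{cor:equil}. First I would dispose of the upper estimate. In cases~\ref{cor:2:cond1} and \ref{cor:2:cond1_2} the hypothesis $\sigma=Ar$ gives $A\big(\tfrac{\sigma}{A}+r\big)^2=A(2r)^2=4\sigma r$, so the bound \eqref{ineq:lyap-dim} of Theorem~\ref{theorem:th3} collapses exactly to the right-hand side of \eqref{finalLDform}; moreover these two subcases reproduce, together with $\sigma=Ar$, precisely the sorting conditions \eqref{cond:lambda_sorted2} under which Theorem~\ref{theorem:th3} is stated. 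In case~\ref{cor:2:cond2} I would instead apply Theorem~\ref{theorem:th4}, whose bound \eqref{estim} already has the required shape and whose hypothesis \eqref{th4:cond-main} is exactly the corollary's scalar assumption specialised to $b=1$. In every regime the standing assumption $\tfrac{2(\sigma+b+1)}{\sigma+1+\sqrt{(\sigma-1)^2+4\sigma r}}\le1$ guarantees $s\in[0,1)$, so that the admissible exponent $d=j+s=3-\tfrac{2(\sigma+b+1)}{\sigma+1+\sqrt{(\sigma-1)^2+4\sigma r}}$ supplied by Theorem~\ref{theorem:th1} is legitimate.

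Next I would compute $\dim_{\rm L}(\{\varphi^t_{\rm GD}\}_{t\ge0},S_0)$ directly. Evaluating \eqref{eq:J} at $S_0=(0,0,0)$ yields a block-triangular Jacobian whose spectrum consists of $-b$ together with the two roots $\lambda_{\pm}=\tfrac12\big(-(\sigma+1)\pm\sqrt{(\sigma-1)^2+4\sigma r}\,\big)$ of the upper-left $2\times2$ block, a block that does not involve $A$. Writing $Q=\sqrt{(\sigma-1)^2+4\sigma r}$, the corollary's scalar hypothesis is equivalent to $Q\ge\sigma+1+2b$, which in turn forces $r\ge\tfrac{(1+b)(\sigma+b)}{\sigma}>1$; hence $\lambda_+>0$, and the same inequality $Q\ge\sigma+1+2b$ gives $-b>\lambda_-$. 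Thus the three eigenvalues are simple, real, and ordered as $\lambda_1=\lambda_+>\lambda_2=-b>\lambda_3=\lambda_-$, so Proposition~\ref{ueqcr} applies and reduces the local Lyapunov dimension at $S_0$ to the Kaplan--Yorke value of this spectrum.

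The computational heart is the Kaplan--Yorke step \eqref{lftKY}. I would note that $\lambda_1+\lambda_2\ge0$ is \emph{equivalent} to the scalar hypothesis $\tfrac{2(\sigma+b+1)}{\sigma+1+Q}\le1$, while $\lambda_1+\lambda_2+\lambda_3=\Tr J(S_0)=-(\sigma+1+b)<0$; together these fix $j(S_0)=2$. Substituting then gives $d_{\rm L}^{\rm KY}=2+\tfrac{\lambda_1+\lambda_2}{-\lambda_3}=2+\tfrac{Q-(\sigma+1)-2b}{Q+(\sigma+1)}$, and a one-line simplification rewrites this as $3-\tfrac{2(\sigma+b+1)}{\sigma+1+Q}$, matching \eqref{finalLDform}; by Proposition~\ref{ueqcr} this is exactly $\dim_{\rm L}(\{\varphi^t_{\rm GD}\}_{t\ge0},S_0)$.

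Finally I would invoke Corollary~\ref{cor:equil}. The first and third paragraphs show that the equilibrium value equals the admissible exponent $d=j+s$ furnished by Theorem~\ref{theorem:th1} through Theorem~\ref{theorem:th3} or \ref{theorem:th4}; since $S_0\in K$ and the Lyapunov dimension of $K$ dominates that of any invariant point it contains, monotonicity together with the matching upper bound forces equality, which is precisely the assertion of Corollary~\ref{cor:equil}. I expect the only genuine obstacle to be the two algebraic translations in the third paragraph --- recognising that the scalar hypothesis is literally the inequality $\lambda_1+\lambda_2\ge0$ pinning down $j=2$, and that the resulting Kaplan--Yorke fraction collapses to \eqref{finalLDform}; everything else is bookkeeping across the three parameter regimes.
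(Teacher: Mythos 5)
Your proposal is correct and follows essentially the same route as the paper: compute the simple real eigenvalues of $J(S_0)$, verify their ordering and that $\lambda_1+\lambda_2\geq 0$ under the scalar hypothesis, evaluate the Kaplan--Yorke formula, identify this with $\dim_{\rm L}(\{\varphi^t_{\rm GD}\}_{t\geq0},S_0)$ via Proposition~\ref{ueqcr}, and close the gap with Corollary~\ref{cor:equil} using the upper bounds of Theorems~\ref{theorem:th3} and \ref{theorem:th4}. If anything, your treatment is slightly more complete than the paper's, since you derive the eigenvalue ordering and $r>1$ uniformly from the scalar hypothesis (covering case~\ref{cor:2:cond1_2}, which the paper's proof sentence omits) and you explicitly check that the Theorem~\ref{theorem:th3}/\ref{theorem:th4} bounds collapse to the right-hand side of \eqref{finalLDform}.
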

\begin{proof}
The Jacobi matrix $J(u)$ from equation \eqref{eq:J}
at equilibrium $u = S_0$
has the following simple real eigenvalues
\begin{equation}\label{eq:JS0:eigVals}
  \lambda_{1,3}(S_0) = \frac{1}{2}\left(-(\sigma + 1) \pm \sqrt{(\sigma-1)^2 + 4\sigma r}\right),
  \quad \lambda_2(S_0) = -b.
\end{equation}
For $r > 1$, we have $\lambda_1(S_0) > 0$, $\lambda_{2,3}(S_0) < 0$.
If \ref{cor:2:cond1} or \ref{cor:2:cond2}, then it follows that
$\lambda_2(S_0) > \lambda_3(S_0)$,
$\lambda_1(S_0) + \lambda_2(S_0) + \lambda_3(S_0) = -\sigma - b -1 < 0$
and
from \eqref{th4:cond-main}
it follows that $\lambda_1(S_0) + \lambda_2(S_0) \geq 0$.
Then according to \eqref{lftKY} we have
\begin{align*}
  d_{\rm L}^{\rm KY}(\{\lambda_i(S_0)\}_{i=1}^n) &= 2 + \frac{\lambda_1(S_0) + \lambda_2(S_0)}{|\lambda_3(S_0)|} =
  2 + \frac{-(\sigma + 1 -2 b) + \sqrt{(\sigma - 1)^2 + 4 \sigma r}}{\sigma + 1 + \sqrt{(\sigma - 1)^2 + 4 \sigma r}} =
  &\\
  & = 3 - \frac{2(\sigma + b + 1)}{\sigma + 1 + \sqrt{(\sigma - 1)^2 + 4 \sigma r}}.&
\end{align*}
By Proposition \ref{ueqcr}
\[
    \dim_{\rm L}(\{\varphi^t_{\rm GD}\}_{t\geq0},S_0) =
    d_{\rm L}^{\rm KY}(\{\lambda_i(S_0)\}_{i=1}^n) =
    3 - \frac{2(\sigma + b + 1)}{\sigma + 1 + \sqrt{(\sigma - 1)^2 + 4 \sigma r}}
\]
and according to Corollary \ref{cor:equil} for any compact invariant set
$K \supset S_0$ we get \eqref{finalLDform}.
\end{proof}

Note that formula \eqref{finalLDform} coincides with the exact Lyapunov dimension formula
for the classical Lorenz system~ \cite{Leonov-2002,LeonovK-2015-AMC,LeonovKKK-2016-CNSCS}.
In the Lorenz system the maximum of the local Lyapunov dimensions
is also achieved at the zero equilibrium
and this fact is known as the so-called \emph{Eden conjecture on the Lorenz system}
 \cite{Eden-1989-PhD,Eden-1990,LeonovL-1993,DoeringG-1995}.
The main direction of its further study is to extend the domain of parameters
for which the exact Lyapunov dimension formula for the Lorenz system is valid.

\section{Numerical experiments and discussion of the results}
Below we consider the dynamical system $\{\varphi^t_{\rm GD}\}_{t\geq0}$,
generated by the generalized Lorenz system \eqref{sys:lorenz-general},
various types of its attractors $K$, and their Lyapunov dimensions.
Here $\varphi^t_{\rm GD}\big((x_0,y_0,z_0)\big)$
is a solution of \eqref{sys:lorenz-general} with the initial condition $(x_0,y_0,z_0)$, i.e.
\[
  \varphi^t_{\rm GD}\big((x_0,y_0,z_0)\big) = \big(x(t,(x_0,y_0,z_0)),y(t,(x_0,y_0,z_0)),z(t,(x_0,y_0,z_0)) \big).
\]

Let $\rho(K, u) = \inf_{v \in K} ||v - u||$
be the distance from the point $u \in U$ to the set $K \subset U$.
For a dynamical system $\{\varphi^t\}_{t \geq 0}$, a bounded closed invariant set K is
 \cite{LeonovKM-2015-EPJST}:
\begin{enumerate}[label=(\roman*)]
  \item a {\it (local)  attractor} if it is a minimal locally attractive set
        (i.e., $\lim_{t \to +\infty} \rho (K, \varphi^t(u)) = 0$ for all
        $u \in K(\varepsilon)$, where $K(\varepsilon)$ is a
        certain $\varepsilon$-neighborhood of set $K$),
  \item a {\it global attractor} if it is a minimal globally attractive set
        (i.e., $\lim_{t \to +\infty} \rho (K, \varphi^t(u)) = 0$
        for all $u \in \mathbb{R}^n$),
  \item a {\it (local)  B-attractor} if it is a minimal uniformly locally attractive set
        (i.e., for a certain $K(\varepsilon)$, any $\delta > 0$, and any bounded set $B$
        there exists $t(\delta, B) > 0$ such that
        $\varphi^t(B \cap K(\varepsilon)) \subset K(\delta)$ for all $t \geq t(\delta, B)$),
  \item a {\it global B-attractor} if it is a minimal uniformly globally attractive set
        (i.e., for any $\delta > 0$ and any bounded set $B \subset \mathbb{R}^n$
        there exists $t(\delta, B) > 0$ such that
        $\varphi^t(B) \subset K(\delta)$ for all $t \geq t(\delta, B)$).
\end{enumerate}


\noindent In the definition of attractor we assume closeness for the sake of uniqueness
since the closure of a locally attractive invariant set is also a locally
attractive invariant set (e.g., consider an attractor with excluded one of the
embedded unstable periodic orbits).
The above definition implies that a global attractor
involves the set of all equilibria.
The property of uniform attractivity
implies that a global B-attractor involves the unstable manifolds
of unstable equilibria
(the same is true for B-attractor if its neighborhood considered
contains some unstable equilibria).
If the dynamical system $\{\varphi^t\}_{t\geq0}$ possesses an absorbing set
$\mathcal{B}$, then the global attractor
can be constructed as follows: 
$\cap_{\tau > 0} \overline{\cup_{t \geq \tau} \varphi^t\left(\mathcal{B}\right)}$.

In the following, we consider system \eqref{sys:lorenz-general}
with two sets of parameters: $b = 1, \sigma = 4, A = 0.0052$,
and $r = 687.5$ or $r = 700$,
and visualize possible types of attractors
in Fig.~\ref{fig:selfexcite-all} and Fig.~\ref{fig:hidden-attr-all}, respectively.
Visualizations of chaotic self-excited ($r = 687.5$) and hidden ($r = 700$) attractors
in Fig.~\ref{fig:selfexcite-all} and Fig.~\ref{fig:hidden-attr-all}
are obtained by numerical integration of system \eqref{sys:lorenz-general}
on the time-interval $[0,60]$
with initial condition $P_1 = (10,60,800)$
and visualizations of numerical solutions after a transient process
(the separation of the trajectory into transition
process and approximation of attractor is rough).

\begin{figure}[th]
    \centering
    \subfloat[Local self-excited attractor.] {
      \label{fig:only-hidden}
      \includegraphics[width=0.31\textwidth]{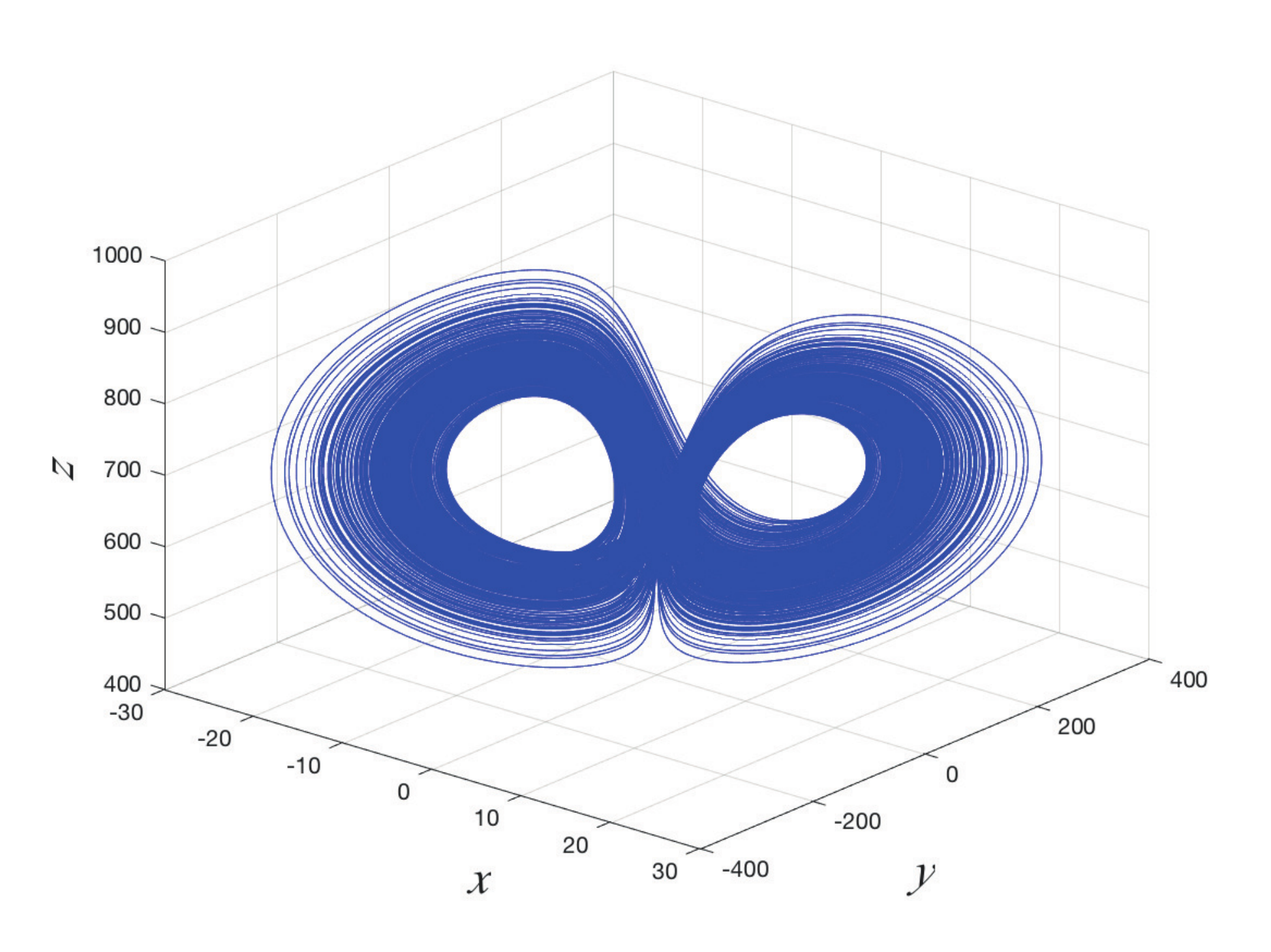}
    }
    \hfill
    \subfloat[Global attractor (the union of equilibria $S_{0,1,2}$ and local self-excited attractor).] {
      \label{fig:only-sepa}
      \includegraphics[width=0.31\textwidth]{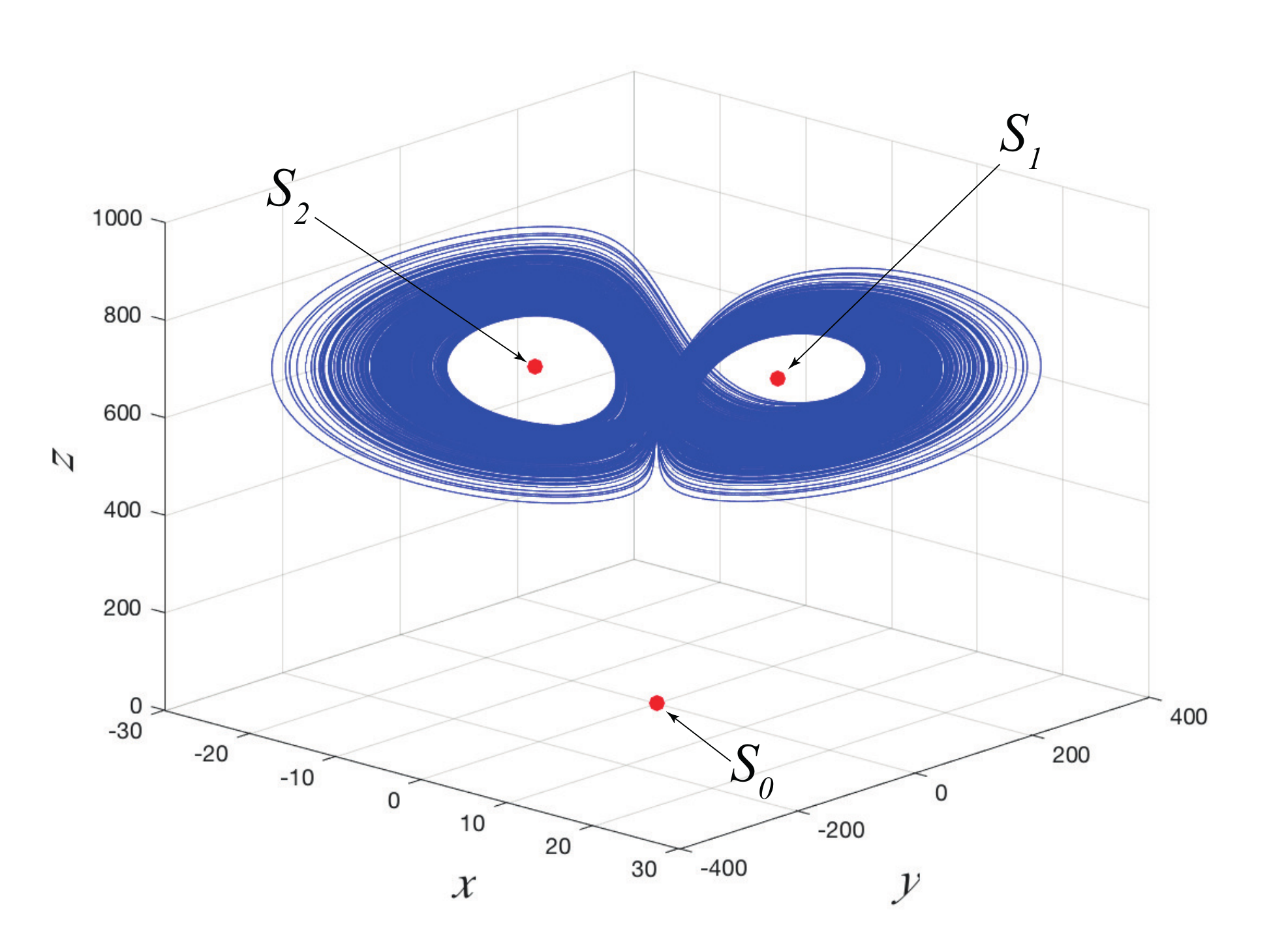}
    }
    \hfill
    \subfloat[Global B-attractor (includes local self-excited attractor (blue),
    equilibria $S_{0,1,2}$,
    and their unstable manifold (green)).] {
      \label{fig:glob}
      \includegraphics[width=0.31\textwidth]{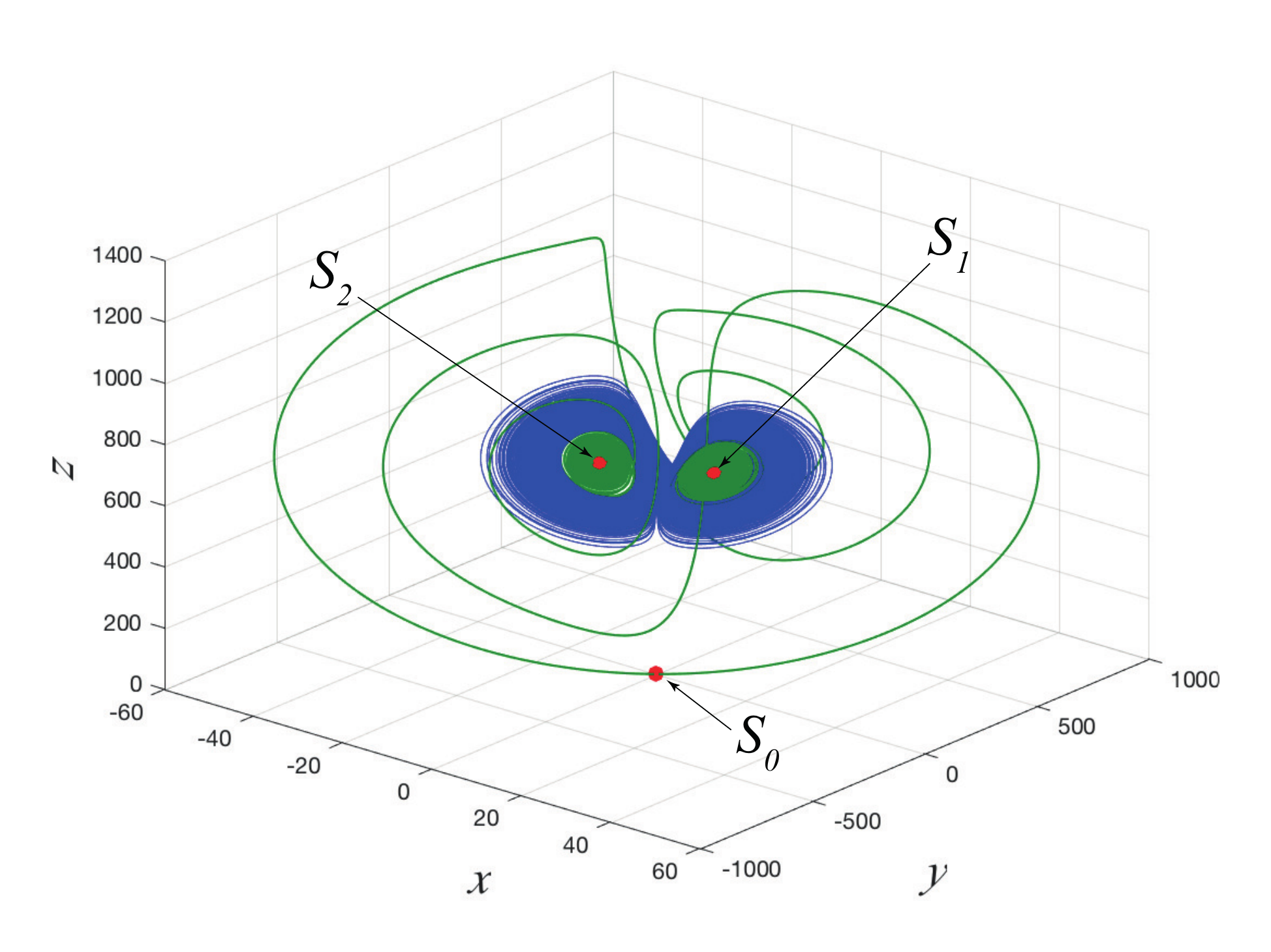}
    }
    \caption{Monostability. Numerical visualization of various types of attractors in system \eqref{sys:lorenz-general}
    with $b = 1$, $\sigma = 4$, $A = 0.0052$, $r=687.5$.}
  \label{fig:selfexcite-all}
\end{figure}
\begin{figure}[!ht]
    \centering
    \subfloat[Hidden attractor.] {
      \label{fig:only-hidden}
      \includegraphics[width=0.23\textwidth]{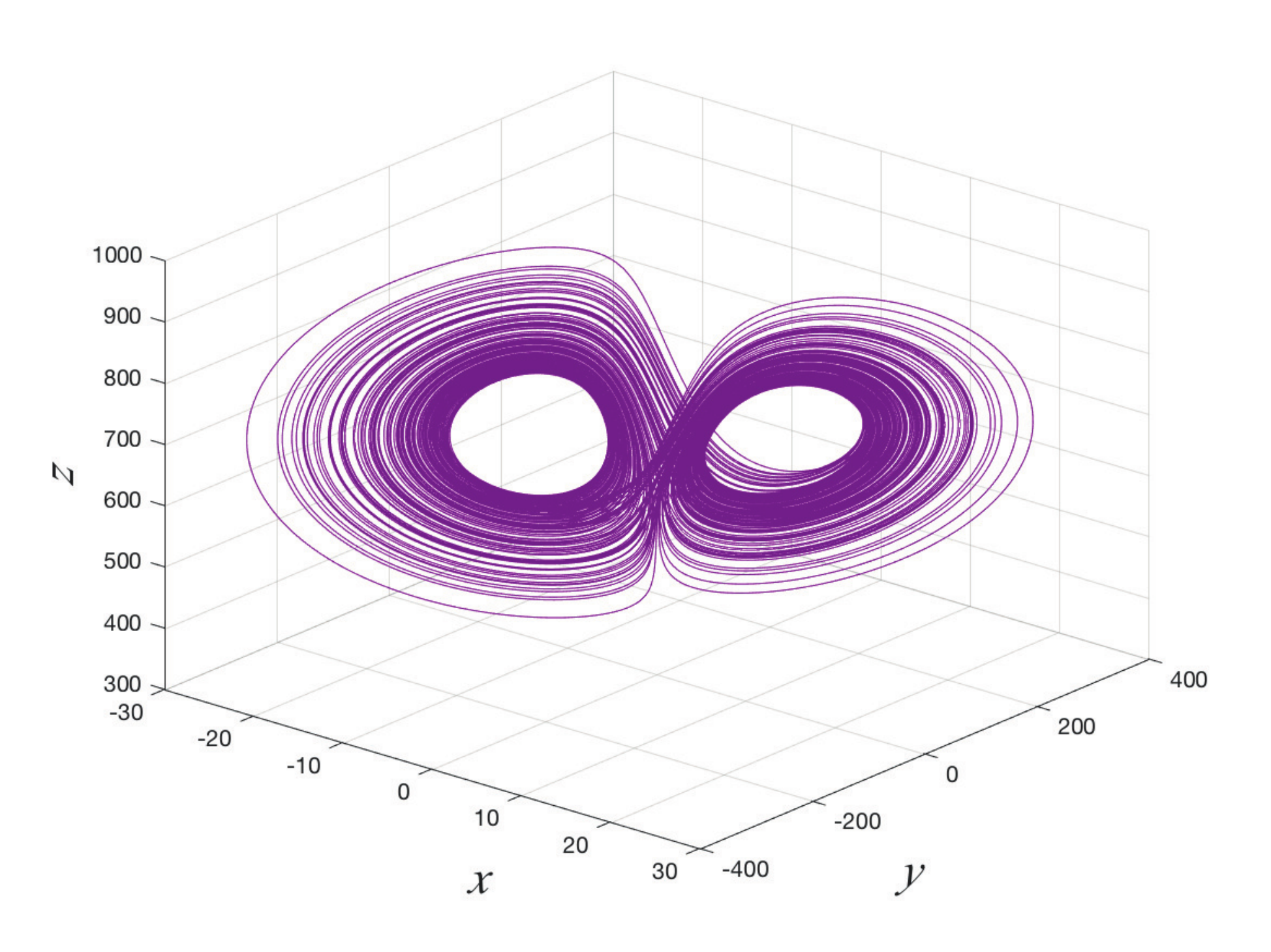}
    }
    \hfill
    \subfloat[B-attractor (includes equilibria $S_{0,1,2}$ and unstable manifold of $S_0$).] {
      \label{fig:only-sepa}
      \includegraphics[width=0.23\textwidth]{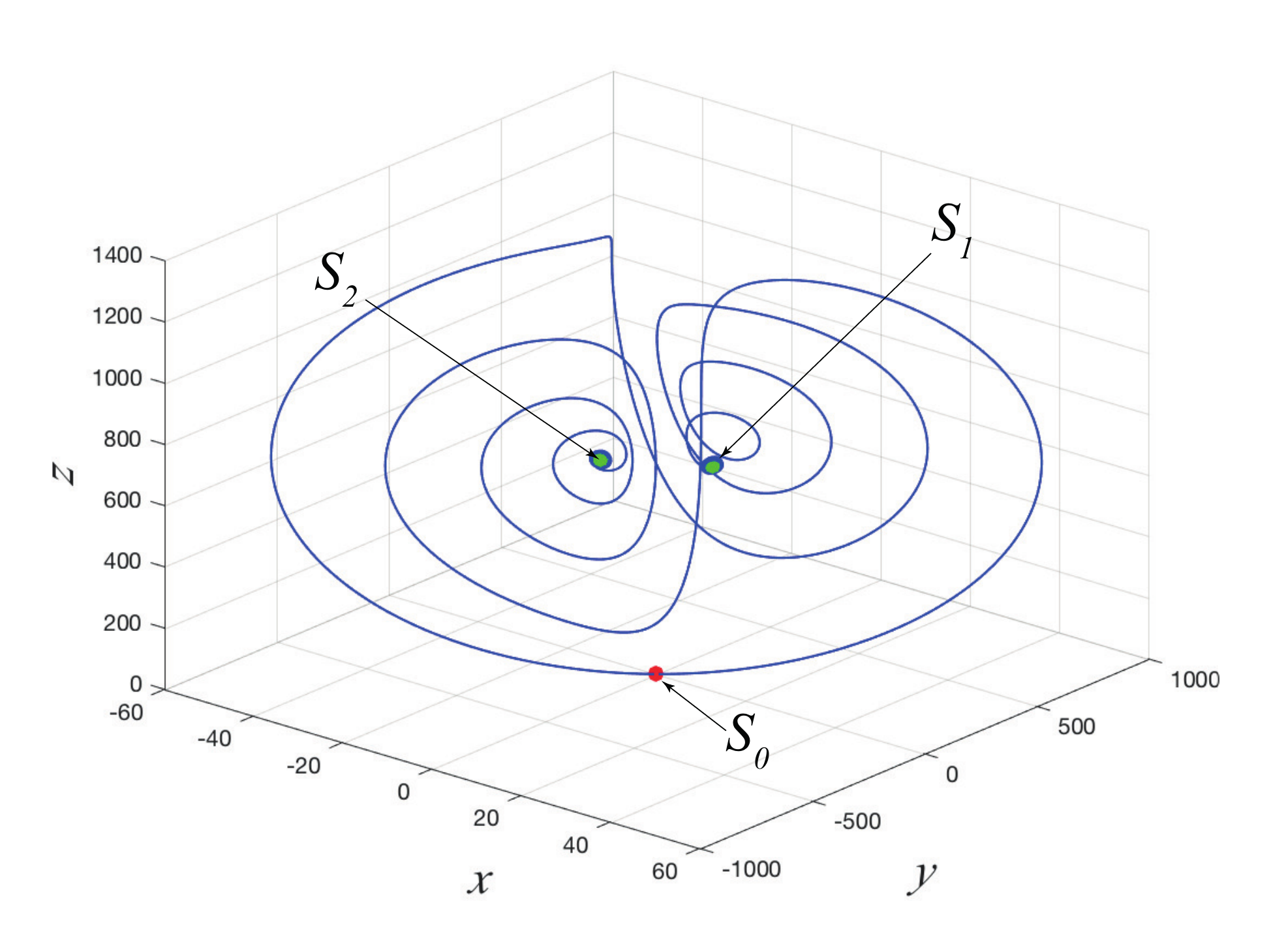}
    }
    \hfill
    \subfloat[Global attractor(the union of equilibria $S_{0,1,2}$ and local hidden attractor).] {
      \label{fig:glob}
      \includegraphics[width=0.23\textwidth]{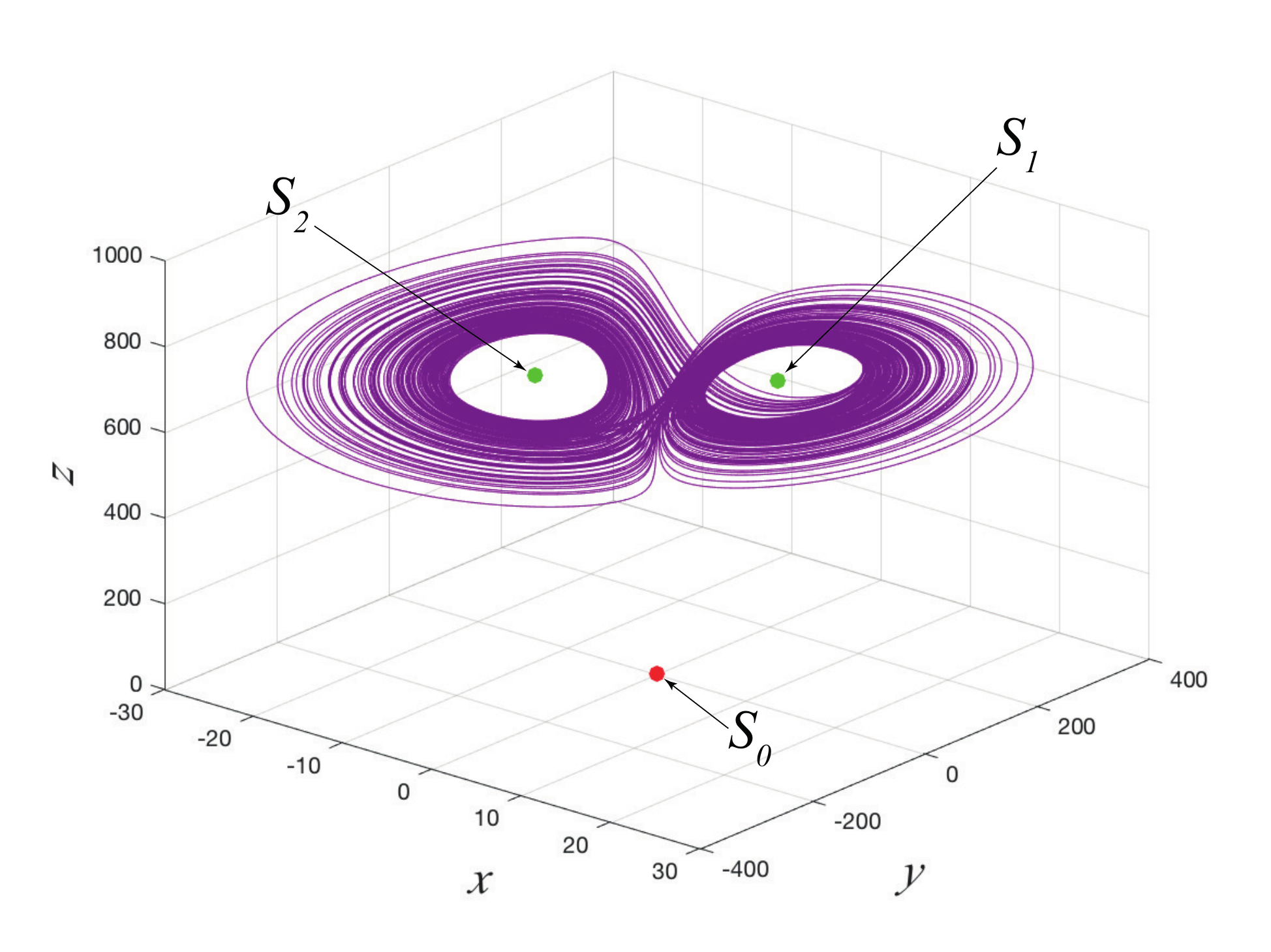}
    }
    \hfill
    \subfloat[Global B-attractor (the union of equilibria $S_{0,1,2}$, unstable manifold of $S_0$,
    and local hidden attractor).] 
    {
      \label{fig:glob}
      \includegraphics[width=0.23\textwidth]{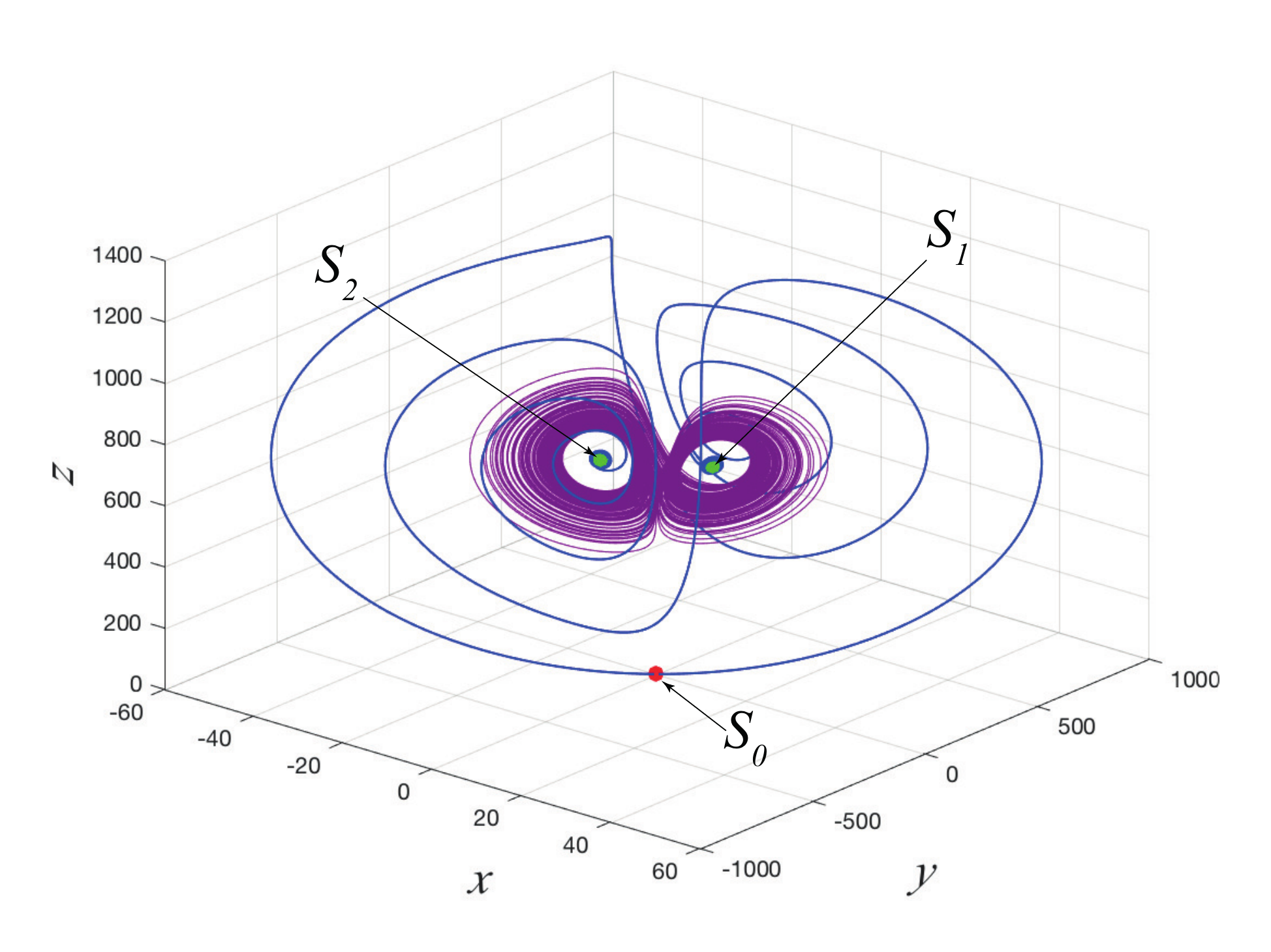}
    }
    \caption{Multistability. Numerical visualization of various types of attractors in system \eqref{sys:lorenz-general}
    with $b = 1$, $\sigma = 4$, $A = 0.0052$, $r=700$.}
  \label{fig:hidden-attr-all}
\end{figure}

Further we use the compact notations for
the finite-time Lyapunov dimensions:
$\dim_{\rm L}(t,u) = \dim_{\rm L}(\varphi^t_{\rm GD}, u)$,
$\dim_{\rm L}(t,K) = \dim_{\rm L}(\varphi^t_{\rm GD}, K)$,
and for the Lyapunov dimension:
$\dim_{\rm L}K = \dim_{\rm L}(\{\varphi^t_{\rm GD}\}_{t \geq0}, K)$.
For the chosen initial point $u_0=(x_0,y_0,z_0)$ and time interval $[0,T]$,
which are used to visualize the attractor $K$,
there are the following substantial questions related
to the computation of the finite-time Lyapunov dimension of $K$.
The first question is
whether there exists the limit
$\lim_{t \to +\infty}\dim_{\rm L}(\varphi^t,K)=\dim_{\rm L}K$
and, if not, whether for a given time interval $[0,T]$ the relation
$
 \dim_{\rm L}(T,K) \leq
 \inf_{t \in [0,T)}\dim_{\rm L}(t,K)
$ is true.
In general, there is no rigorous justification of the choice of $t$
and it is known that unexpected jumps
of $\dim_{\rm L}(t,K)$
can occur (see, e.g. Fig.~\ref{fig:tLD}).
Thus it is reasonable to compute $\inf_{t \in [0,T)}\dim_{\rm L}(T,K)$
instead of $\dim_{\rm L}(T,K)$,
but at the same time for any $T$
the value $\dim_{\rm L}(T,K)$ gives also a valid upper estimate for $\dim_{\rm H}K$.
The second question is whether
a given initial point $u_0$ belongs to the attractor $K$ or only to its basin of attraction
(and thus the whole semi-orbit $\{\varphi^t_{\rm GD}(u_0),\ t\geq0 \}$
belongs only to the basin of attraction),
and, if yes, whether $u_0$ is substantial for the Lyapunov dimension,
i.e. whether the relation
\(
  \dim_{\rm L}(K)
  =
  \dim_{\rm L} (u_0)
\)
is true or \(
  \dim_{\rm L}(K)
  =
  \dim_{\rm L} (K\setminus u_0)
\).
Since it is a challenging task to give
justified answers to these questions,
for numerical computation of the Lyapunov dimension
we have to consider a dense grid of points $K_{\rm grid}$
on a numerical approximation (visualization) of $K$
and approximate the Lyapunov dimension of attractor $K$
by $\max_{u \in K_{\rm grid}} \dim_{\rm L}(t, u)$.
Finally, in numerical experiments we can
expect
\[
\begin{aligned}
  & \dim_{\rm H}K
  \leq
  \dim_{\rm L}K
  \approx &
  \\
  &
  \qquad \qquad
  \approx \inf_{t \in [0,T]} \max_{u \in K_{\rm grid}} \dim_{\rm L}(t, u)
  = \max_{u \in K_{\rm grid}} \dim_{\rm L}(t_{\rm inf}, u)
  \leq
  \max_{u \in K_{\rm grid}} \dim_{\rm L}(T, u)
  \approx \dim_{\rm L}(T,K).
\end{aligned}
\]


\begin{figure}[h!]
    \centering
    \includegraphics[width=0.62\textwidth]{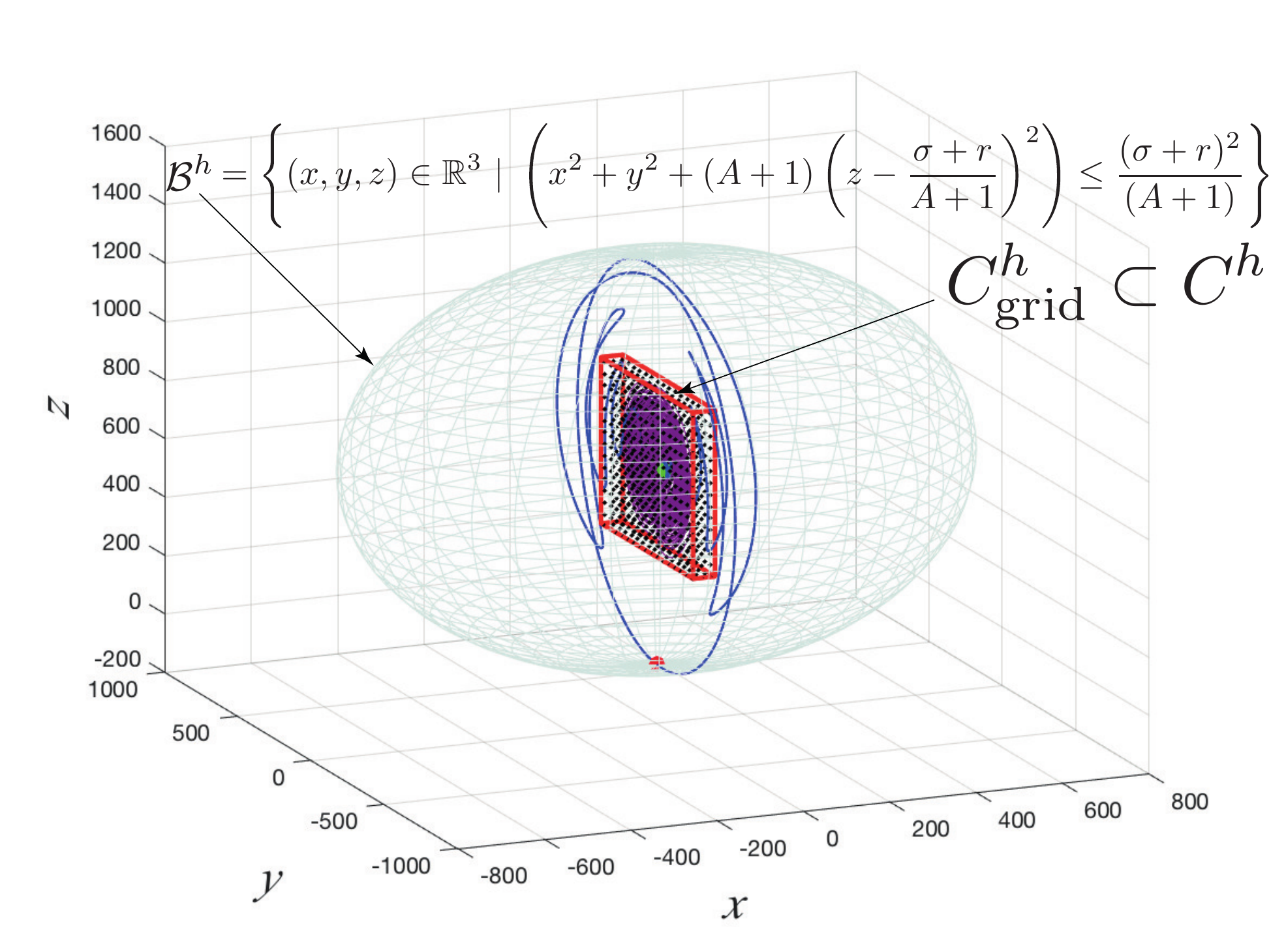}
    \caption{Localization of the hidden attractor by the absorbing set
    $\mathcal{B}^{h}$, cuboid
    $C^{h} =[-30,25]\times[-330,300]\times[395,956]$,
    and the corresponding grid of points $C^{h}_{\rm grid}$.
    }
    \label{fig:gd:grid}
\end{figure}

In Fig.~\ref{fig:gd:grid} is shown the grid of points $C^{h}_{\rm grid}$ covering the hidden attractor:
the grid points fill cuboid $C^{h} = [-30,25]\times[-330,300]\times[395,956]$
with the distance between points equals $5$;
the grid of points $C^{se}_{\rm grid}$ covering self-excited attractor
fill cuboid $C^{se}=[-25,25]\times[-305,290]\times[410,930]$).
The time interval is $[0,\, 60]$
and the integration method is MATLAB ode45.
Remark that if for a certain time the computed trajectory is out of the cuboid,
the corresponding value of finite-time local Lyapunov dimension
does not taken into account in the computation of maximum
of the finite-time local Lyapunov dimension
(there are trajectories with initial data in cuboid,
which are attracted to the zero equilibria, i.e. belong to its stable manifold,
e.g. system \eqref{sys:lorenz-general} for $x=y=0$ is $\dot z = -bz$).
The infimum on the time interval
is computed at the points $\{t_k\}_{1}^{N}$ with time step $t_{\Delta}=t_{i+1}-t_i=0.1$.
Note that if for a certain time $t=t_k$ the computed trajectory is out of the cuboid,
the corresponding value of finite-time local Lyapunov dimension
does not taken into account in the computation of maximum
of the finite-time local Lyapunov dimension
(there are trajectories with initial data in cuboid,
which are attracted to the zero equilibria, i.e. belong to its stable manifold,
e.g. system \eqref{sys:lorenz-general} for $x=y=0$ is $\dot z = -bz$).
For the finite-time Lyapunov exponents (FTLE) computation
we use MATLAB realization \cite{LeonovKM-2015-EPJST} of a method,
based on SVD decompositions.
For computation of the finite-time Lyapunov characteristic exponents (FTLCE)
we use MATLAB realization \cite{KuznetsovMV-2014-CNSNS} of a method,
based on QR decompositions.

For
both sets of parameters
(see Fig.~\ref{fig:selfexcite-all} and Fig.~\ref{fig:hidden-attr-all})
we compute:
1) finite-time local Lyapunov dimensions $\dim_{\rm L}(60,\cdot)$
at the point
$P_1 = (10,60,800)$, which belong to both grids $C^{h,se}_{\rm grid}$,
at the point $P_2 = (-0.0074, -0.0997, 0)$ on the unstable manifold of zero equilibria $S_0$;
2) maximums of the finite-time local Lyapunov dimensions at the points of grid
$\max_{u \in C^{h,se}_{\rm grid}} \dim_{\rm L}(t,u)$
for the time points $t=t_k=0.1\,k$ $(k=1,..,600)$ and
the infimum of the maximums;
3) the corresponding values, given by Kaplan-Yorke formula with respect to finite-time Lyapunov characteristic exponents.
The results are given in Table \ref{table:results:se} and \ref{table:results:hid}.

\begin{table}[!h]
\centering
\caption{The set of parameters corresponding the self-excited attractor (see Fig.~\ref{fig:selfexcite-all})}
\begin{tabular}{
|>{\centering}m{3.4cm}<{\centering}||
>{\centering}m{3.3cm}<{\centering}|
>{\centering}m{5cm}<{\centering}|
>{\centering}m{2cm}<{\centering}|
>{\centering}m{2.3cm}<{\centering}|}
\hline
& $t = 60$ \\ $u = (10, 60, 800)$ & $t = 60$ \\ $u = (-0.0074, -0.0997, 0)$ &
$t = 60$ \\ $\max_{u \in C^{se}_{\rm grid}}$ &
$\displaystyle \inf_{t \in [0,\, 60]} \max_{u \in C^{se}_{\rm grid}}$
\tabularnewline\hline
$\dim_{\rm L}(t, \, u)$ \\ (SVD)& $2.1345$ & $2.1468$ & $2.1721$ & $2.1699$
\tabularnewline\hline
${\scriptstyle {\rm d}_{\rm L}^{\rm KY}(\{{\rm LCE}_i(t, \, u)\}_{i=1}^3)}$ \\ (QR) &
$2.1414$ & $2.1262$ & $2.1876$ & $2.1854$
\tabularnewline\hline
\end{tabular}
\label{table:results:se}
\end{table}

\begin{table}[!h]
\centering
\caption{The set of parameters corresponding the hidden attractor (see Fig.~\ref{fig:hidden-attr-all})}
\begin{tabular}{
|>{\centering}m{3.4cm}<{\centering}||
>{\centering}m{3.3cm}<{\centering}|
>{\centering}m{5cm}<{\centering}|
>{\centering}m{2cm}<{\centering}|
>{\centering}m{2.3cm}<{\centering}|}
\hline
& $t = 60$ \\ $u = (10, 60, 800)$ & $t = 60$ \\ $u = (-0.0074, -0.0997, 0)$ &
$t = 60$ \\ $\max_{u \in C^{h}_{\rm grid}}$ & $\displaystyle \inf_{t \in [0,\, 60]} \max_{u \in C^{h}_{\rm grid}}$
\tabularnewline\hline
$\dim_{\rm L}(t, \, u)$ \\ (SVD) & $2.1271$ & $1.2335$ & $2.1638$ & $2.1580$
\tabularnewline\hline
${\scriptstyle {\rm d}_{\rm L}^{\rm KY}(\{{\rm LCE}_i(t, \, u)\}_{i=1}^3)}$ \\ (QR) &
$2.1372$ & $1.0139$ & $2.1745$ & $2.1735$
\tabularnewline\hline
\end{tabular}
\label{table:results:hid}
\end{table}

\begin{figure}[h!]
  \centering
  \subfloat[Approximation via SVD based method.]{
    \label{fig:tLD:se:SVD}
    \includegraphics[width=\textwidth]{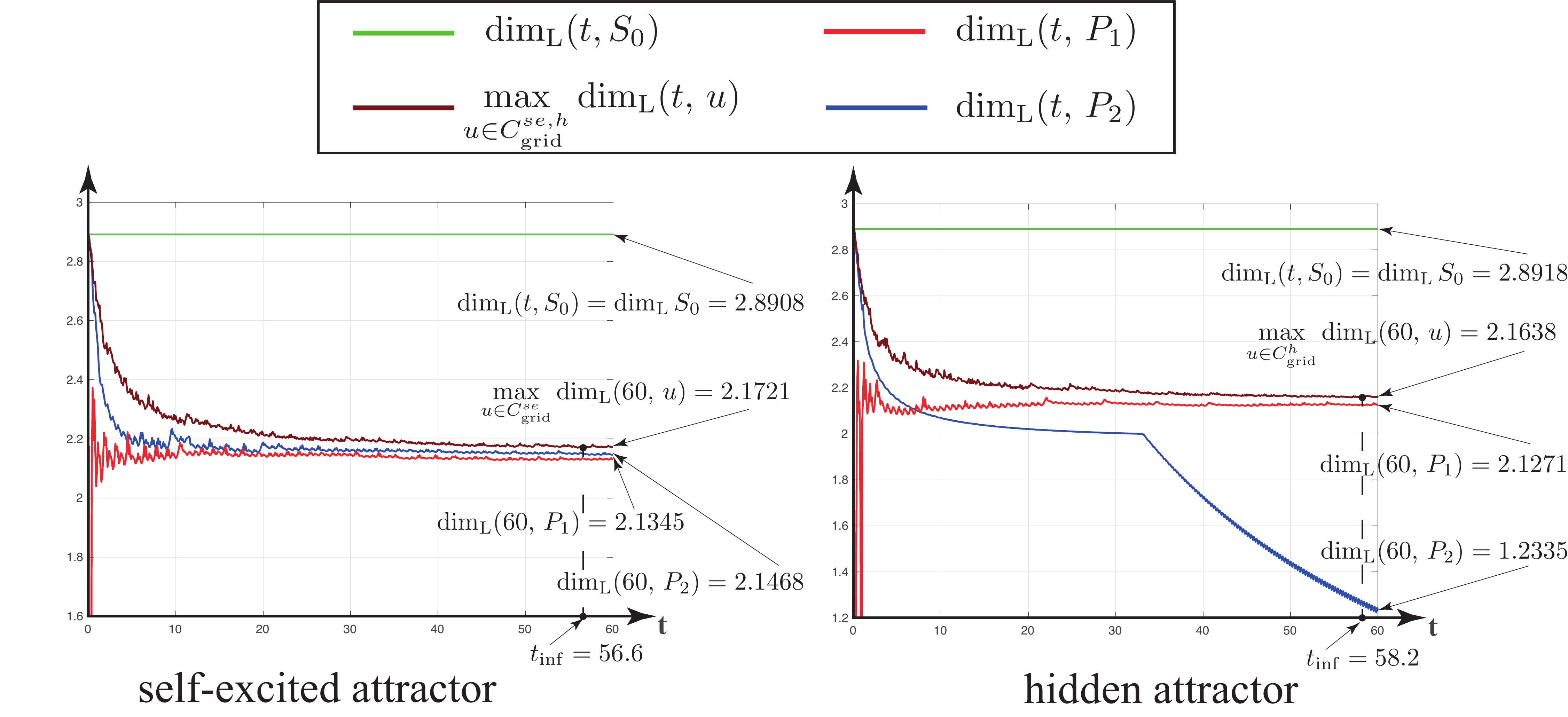}
  }

  \subfloat[Approximation via QR based method.]{
    \label{fig:tLD:QR}
    \includegraphics[width=\textwidth]{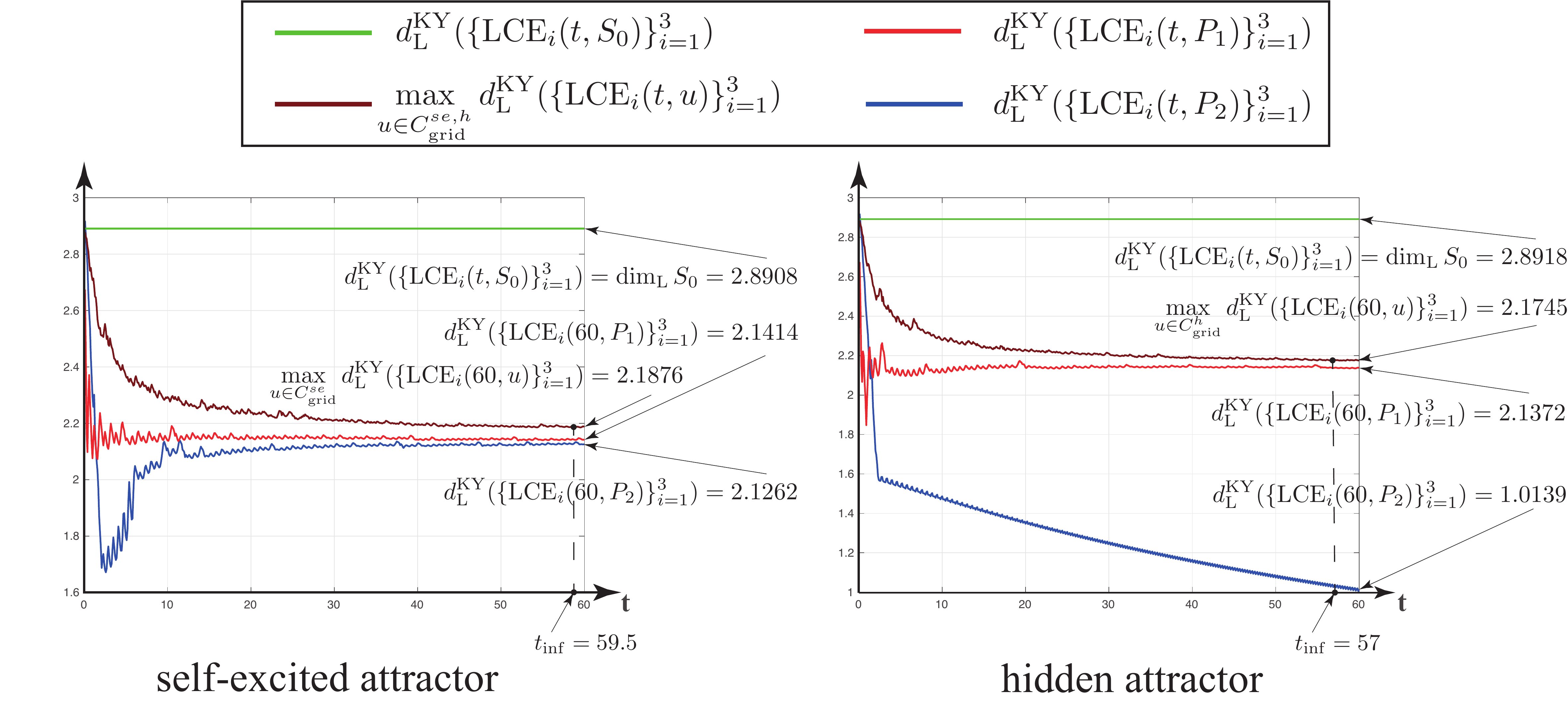}
  }
  \caption{
    Dynamics of the finite-time local Lyapunov dimensions
    on the time interval $t \in \lbrack 0, 60 \rbrack$:
    maximum on the grid of points (dark red),
    at the point $P_1 = (10, \, 60, \, 800) \in C^{se,h}_{\rm grid}$ (light red),
    at $P_2 = (-0.0074, \,  -0.0997, \, 0)$
    from 1D unstable manifold of $S_0$ (blue);
    at the equilibrium $S_0$ (green).
  }
  \label{fig:tLD}
\end{figure}

The behavior of finite-time local Lyapunov dimensions
for different points and their maximum on a grid of points
is shown in Fig.~\ref{fig:tLD}.
These values are in good correspondence with the exact Lyapunov dimension,
obtained in Corollary~\ref{conseq}.
For the global attractor $K_{\text{global}}$
and global B-attractor $K_{\text{global B}}$ in Fig.~\ref{fig:selfexcite-all}(b,c)
we have
\[
  \dim_{\rm L} K_{\text{global}} = \dim_{\rm L} K_{\text{global B}}
  = 2.8908...
\]
Since the global B-attractor $K_{\text{global B}}$ in Fig.~\ref{fig:selfexcite-all}(c)
involves two-dimensional unstable manifolds of equilibria $S_{1,2}$,
we have
\[
  2 \leq \dim_{\rm H} K_{\text{global B}} \leq \dim_{\rm L} K_{\text{global B}}.
\]

For the B-attractor $K_{\text{B}}$, the global attractor $K_{\text{global}}$,
global B-attractor $K_{\text{global B}}$,
in Fig.~\ref{fig:hidden-attr-all}(b,c,d)
we have
\[
  \dim_{\rm L} K_{\text{global}} =
  \dim_{\rm L} K_{\text{global B}} =
  \dim_{\rm L} K_{\text{B}}
  = 2.8918...
\]
Since the global B-attractor $K_{\text{global B}}$ in Fig.~\ref{fig:hidden-attr-all}(d)
involves one-dimensional unstable manifolds of equilibrium $S_{0}$,
we have
\[
  1 \leq \dim_{\rm H} K_{\text{global B}} \leq \dim_{\rm L} K_{\text{global B}}.
\]

Remark that
the absorbing sets $\mathcal{B}^{se}=\mathcal{B}(687.5,4,0.0052)$
and $\mathcal{B}^{h}=\mathcal{B}(700,4,0.0052)$
involve all the considered attractors in
Fig.~\ref{fig:selfexcite-all} and Fig.~\ref{fig:hidden-attr-all}, respectively.
Thus, for the corresponding grid of points
by estimation \eqref{dLKYeig} with $S=I$,
we get an estimate for any attractor $K$ in Fig.~\ref{fig:selfexcite-all}
(here the distance between grid points is 20):
\begin{equation}\label{KYeigonabs-se}
 \dim_{\rm H}K \leq
 \dim_{\rm L} K \leq
 \sup_{u \in \mathcal{B}^{se}}d_{\rm L}^{\rm KY}\big(\{\lambda_{j}(u)\}_{i=1}^n\big)
 \approx
 \sup_{u \in \mathcal{B}^{se}_{\rm grid}}d_{\rm L}^{\rm KY}\big(\{\lambda_{j}(u)\}_{i=1}^n\big)
 = 2.982747... \, ,
\end{equation}
and for any attractor $K$ in Fig.~\ref{fig:hidden-attr-all}
\begin{equation}\label{KYeigonabs-h}
 \dim_{\rm H}K \leq
 \dim_{\rm L} K \leq
 \sup_{u \in \mathcal{B}^{h}}d_{\rm L}^{\rm KY}\big(\{\lambda_{j}(u)\}_{i=1}^n\big)
 \approx
 \sup_{u \in \mathcal{B}^{h}_{\rm grid}}d_{\rm L}^{\rm KY}\big(\{\lambda_{j}(u)\}_{i=1}^n\big)
 = 2.983037... \, .
\end{equation}

\begin{figure}[h!]
  \centering
    \includegraphics[width=\textwidth]{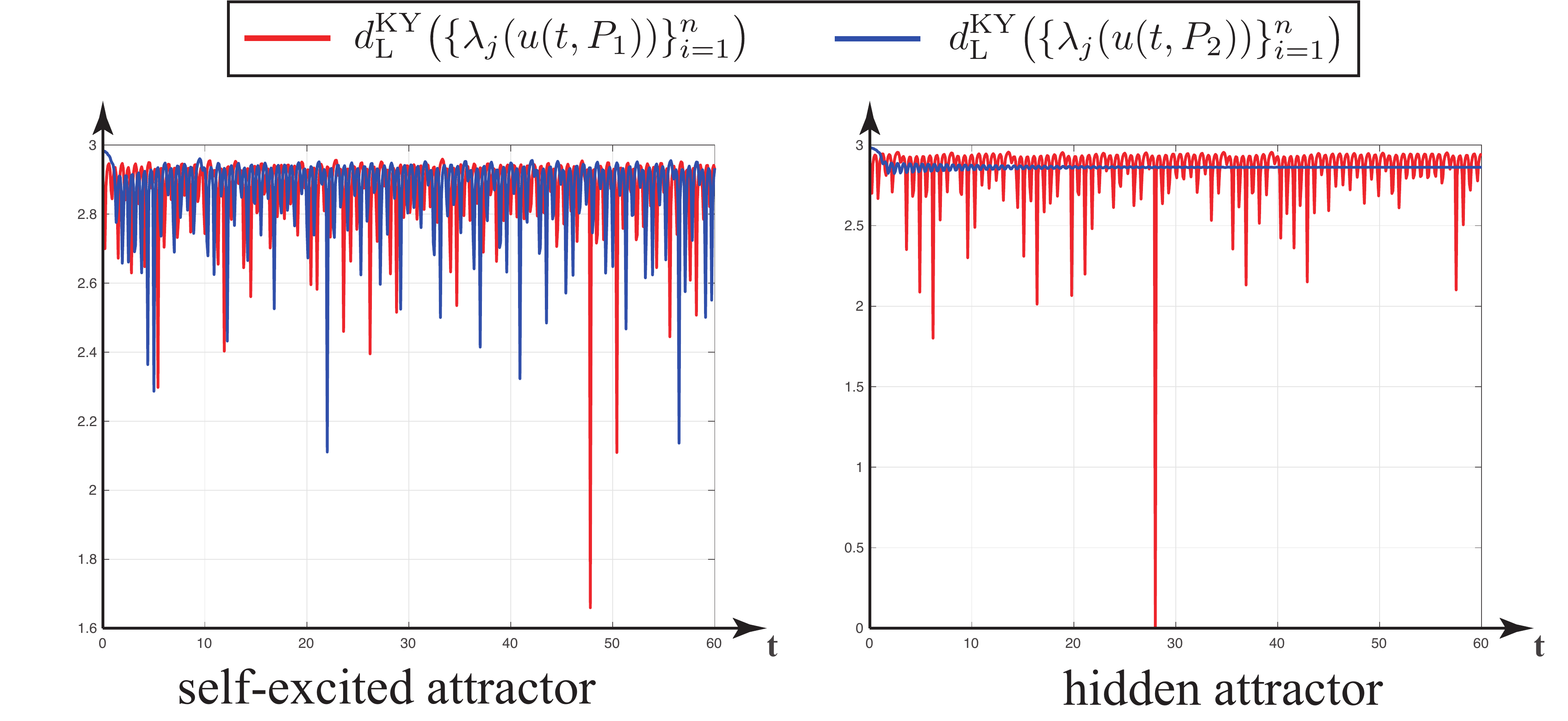}
  \caption{
    Dynamics of $d_{\rm L}^{\rm KY}\big(\{\lambda_{j}(u(t,P_1))\}_{i=1}^n\big)$ (red) and
    $d_{\rm L}^{\rm KY}\big(\{\lambda_{j}(u(t,P_2))\}_{i=1}^n\big)$ (blue)
    along the trajectories with initial data
    $P_1 = (10, \, 60, \, 800) \in C^{se,h}_{\rm grid}$ and
    $P_2 = (-0.0074, \,  -0.0997, \, 0)$.
  }
  \label{fig:tLambda}
\end{figure}

The above numerical experiments lead to the following important conluding remarks.
While the Lyapunov dimension,
unlike the Hausdorff dimension,
is not a dimension in the rigorous sense \cite{HurewiczW-1941,Kuratowski-1966}
(e.g. the Lyapunov dimension of a saddle point or a periodic orbit
can be noninteger and has different values including those close to $n$),
it gives an upper estimate of the Hausdorff dimension.
The sets with noninterger Hausdorff dimension are referred as the fractal sets \cite{EckmannR-1985}.
Let the attractor $K$ or the corresponding absorbing set $\mathcal{B} \supset K$
is known (see, e.g. \eqref{absorb_set}).
If one's purpose is to demonstrate that $\dim_{\rm H} K \leq \dim_{\rm L} K < n$,
then it can be achieved by Proposition~\ref{thm:dLKYeig}
without integration of the considered dynamical system
(see, e.g. \eqref{KYeigonabs-se} and \eqref{KYeigonabs-h}).
In general, one can not expect to get the same values of
the symmetrized Jacobian matrix eigenvalues
at different points (see, e.g. Fig.~\ref{fig:tLambda}),
thus the maximum of $d_{\rm L}^{\rm KY}\big(\{\lambda_{j}(u)\}_{i=1}^n\big)$
on a grid of points has to be considered.
If the purpose is to get a precise estimation of the Hausdorff dimension,
then one can use \eqref{HDOinf} and have to compute the Lyapunov dimension.
To be able to repeat a computation of Lyapunov dimension one need to know
the considered compact invariant set $K$ and initial points of considered trajectories
$\{u_i\}_{i=1}^{N}=K_{\rm grid}$ on the set,
time interval $(0,T] = \bigcup_{i=0}^{M-1} (t_i,t_{i+1}]$,
and the method of Lyapunov exponents computation.
Finite-time Lyapunov dimension is defined by singular values functions
and for computing the corresponding Lyapunov exponents
one has to use a numerical algorithms based on the SVD decomposition
(not on the QR decomposition).
While unexpected jumps in the values of the local Lyapunov exponents and Lyapunov dimension
may occur, there is no rigorous justification of the choice of time $T$, and,
thus, the $\inf_{t \in \{t_i\}_{i=0}^{M}, t_M=T}$ of the finite-time Lyapunov dimensions
often gives better estimates.
In general, in numerical experiments one can not expect to get the same values of
the finite-time local Lyapunov exponents and Lyapunov dimension for different points,
thus the maximum of the finite-time local Lyapunov dimensions
on the grid of point ($\max_{u \in K_{\rm grid}}$) has to be considered.
Remark that in the work \cite[p.190]{FredericksonKYY-1983}
Kaplan and Yorke called the limit values of the finite-time Lyapunov exponents
$\{\lim\limits_{t\to+\infty}\LEs_i(t,u)\}_{i=1}^n = \{\LEs_i(u)\}_{i=1}^n$,
if they exist and are the same for all $u \in K$
(and therefore \cite{KuznetsovL-2016-ArXiv}
$\dim_{\rm L} K = d_{\rm L}^{\rm KY}(\{\LEs_i(u_0)\}_{i=1}^n)$ for any $u_0 \in K$),
the \emph{absolute} ones and wrote that such absolute values \emph{rarely exist}.

\section{Conclusion and further steps} 
\label{sec:conclusion}
In this paper the Lyapunov dimension of attractors in the Glukhovsky-Dolzhansky fluid convection model has been studied by analytical and numerical methods.
In studying we follow a rigorous approach to the definition
of the Lyapunov dimension and justification of its computation by the Kaplan-Yorke formula,
without using statistical physics assumptions.
The exact Lyapunov dimension formula for the global attractors
is obtained and peculiarities
of the Lyapunov dimension estimation for self-excited and hidden attractors
are discussed.
A tutorial on numerical estimation of the Lyapunov dimension
on the example of the Glukhovsky-Dolzhansky model is presented.

\section*{Acknowledgements}
Authors would like to thank Alexander Gluhovsky,
Professor of the Department of Earth \& Atmospheric Sciences and
Department of Statistics, Purdue University, USA,
for the fruitful discussion and valuable comments
on the low-order models.
This work was supported by the Russian Science Foundation (14-21-00041).


\end{document}